\renewcommand{\Pr}{{\mathbf {Pr}}}
\newcommand{\channel}{{\rm channel}}
\newcommand{\from}{{\leftarrow}}
\newcommand{\decerr}{{\Pr_{\rm dec}}}
\newcommand{\E}{{\mathbf E}}
\newcommand{\calE}{{\cal E}}
\newcommand{\decerrT}{{\Pr_{\rm dec, T}}}
\newcommand{\calc}{{\cal C}}
\newcommand{\cald}{{\cal D}}
\newcommand{\calx}{{\cal X}}
\newcommand{\caly}{{\cal Y}}
\newcommand{\calp}{{\cal P}}
\newcommand{\sinf}{S_{\infty}}
\newcommand{\hinf}{h_{\infty}}
\newcommand{\pinf}{p_{\infty}}
\newcommand{\eps}{\varepsilon}
\newcommand{\R}{{\mathbb R}}
\newcommand{\Z}{{\mathbb Z}}
\newcommand{\delay}{\Delta}
\newcommand{\noise}{\xi}
\newcommand{\mcap}{{\sf Cap}}
\newtheorem{lemma}{Lemma}[section]
\newtheorem{theorem}[lemma]{Theorem}
\newtheorem{observation}[lemma]{Observation}
\newtheorem{corollary}[lemma]{Corollary}
\newtheorem{claim}[lemma]{Claim}
\newtheorem{proposition}[lemma]{Proposition}
\newtheorem{assumption}[lemma]{Assumption}
\newtheorem{definition}[lemma]{Definition}
\newtheorem{lemma}{Lemma}[section]
\newtheorem{theorem}{Theorem}
\newtheorem{claim}{Claim}
\newtheorem{proposition}{Proposition}
\newtheorem{definition}{Definition}
\newenvironment{proof}{{\bf Proof:}}{\hfill\stopproof}
\def\stopproof{\square}
\def\square{\vbox{\hrule height.2pt\hbox{\vrule width.2pt height5pt \kern5pt
\vrule width.2pt} \hrule height.2pt}}
\begin{document}

\title{Delays and the Capacity of Continuous-time Channels}

\author{
Sanjeev Khanna\thanks{
Department of Computer and Information Science,
University of Pennsylvania, Philadelphia PA. Email: {\tt
sanjeev@cis.upenn.edu}. Supported in part by NSF Awards CCF-0635084 and IIS-0904314.}
\and Madhu Sudan\thanks{Microsoft Research, Cambridge, MA 02142, and MIT, Cambridge, MA 02139.
Email: {\tt madhu@mit.edu}. }}

\maketitle

\begin{abstract}
Any physical channel of communication offers two potential reasons
why its capacity (the number of bits it can transmit in a unit of time) might
be unbounded: (1) (Uncountably) infinitely many choices of signal strength at
any given instant of time, and (2) (Uncountably) infinitely many instances of time at
which signals may be sent. However channel noise cancels out the
potential unboundedness of the first aspect, leaving typical channels
with only a finite capacity per instant of time. The latter source
of infinity seems less extensively studied. A potential source
of unreliability that might restrict the capacity also from the
second aspect is ``delay'': Signals transmitted by the sender at
a given point of time may not be received with a predictable delay at the
receiving end. In this work we examine this source of uncertainty by
considering a simple discrete model of delay errors.
In our model
the communicating parties get to subdivide time as microscopically finely as they wish,
but still have to cope with communication delays that are macroscopic and variable.
The continuous process becomes the limit of our process as the time
subdivision becomes infinitesimal. 
We taxonomize this class of communication channels based on whether the
delays and noise are stochastic or adversarial; and based on how much
information each aspect has about the other when introducing its errors.
We analyze the limits of such
channels and reach somewhat surprising conclusions: The capacity
of a physical channel is finitely bounded
only if at least one of the two sources of error
(signal noise or delay noise) is adversarial. In particular the 
capacity is finitely bounded only if the delay is adversarial,
or the noise is adversarial and acts with knowledge of the stochastic
delay. 
If both error sources are stochastic,
or if the noise is adversarial and independent
of the stochastic delay, then the capacity of the associated physical
channel is infinite!
\end{abstract}

\newcommand{\keywords}[1]{{\noindent \bf Keywords: } #1}
\keywords{Communication, Physical Channels, Adversarial errors}

\section{Introduction}

It seems to be a folklore assumption that any physical medium of
communication is constrained to communicating a finite number of
bits per unit of time. This assumption forms the foundations of
both the theory of communication \cite{Shannon} as well as the
theory of computing \cite{Turing}. The assumption also seems well-founded
give the theory of signal processing. In particular the work
of Shannon~\cite{Shannon:SP} explains reasons why such a statement may be
true.

Any physical channel (a copper wire, an optical fiber, vaccuum etc.)
in principle can be used by a {\em sender} to transmit a signal,
i.e., a function $f: [0,T] \to [0,1]$ for some time duration $T$. The
{\em receiver} receives some function $\tilde{f}: [0,T] \to [0,1]$,
which tends to be a noisy, distorted version of the signal $f$.
The goal of a communication system is to design encoders
and decoders that communicate reliably over this channel.
Specifically, one would like to find the largest integer $k_T$
such that there exist functions $E:\{0,1\}^{k_T} \to \{f:[0,T] \to [0,1]\}$
and $D:\{\tilde{f}:[0,T] \to [0,1]\} \to \{0,1\}^{k_T}$ such
that $\Pr_{\tilde{f}|E(m)} [D(\tilde{f}) \ne m] \to 0$ where
$m$ is chosen uniformly from $\{0,1\}^{k_T}$ and $\tilde{f}$ is
chosen by the channel given the input signal $E(m)$.
The {\em capacity of the channel}, normalized per unit of time,
is the $\lim\sup_{T \to \infty} k_T/T$.

In a typical such channel there are two possible source of ``infinity''.
The signal value $f(t)$, for any $t \in [0,T]$ is uncountably large
and if the channel were not ``noisy'' this would lead to infinity capacity,
even if time were discrete. But Shannon, in his works
\cite{Shannon,Shannon:SP}, points out that usually $f(t)$ is not
transmitted as is. Typical channels tend to add noise, typically
a random function $\eta(t)$, which is modeled as a normally distributed
random variable with mean zero and variance $\sigma^2$, and independent
across different instances of time $t$. He points out that after
this noise's effect is taken into account, the channel capacity is
reduced to a finite number (proportional to $1/\sigma^2$) per instant
of time.

Still this leaves a second possible way the channel capacity could be
infinite, namely due to the availability of infinitely many time slots.
This aspect has been considered before in the signal processing literature,
and the works of Nyquist~\cite{Nyquist'24}
and Hartley~\cite{Hartley'28} (see the summary in \cite{Shannon:SP})
once again point out that there is a finite limit. However the reason for
this finite limit seems more axiomatic than physical. Specifically, these
results come from the assumption that the signal $f$ is a linear
combination of a finite number of basis functions, where the basis functions
are sinusoids with frequency that is an integral multiple of some minimal
frequency, and upper bounded by some maximum frequency. This restriction
is then translated into a ``discretization'' result showing it suffices
to sample the signal at certain discrete time intervals, reducing the
problem thus to a finite one.

In this work we attempt to explore the effects of ``continuous time''
more in the spirit of the obstacle raised in the context of the
signal strength, namely that there is an obstacle also to assuming
that time is preserved strictly accross the communication channel.
We do so by introducing and studying a ``delay channel'' where
signals transmitted by the sender arrive somewhat asynchronously at
the receiver's end. We model and study this process
as the limit of a discrete process.

In our discrete model the sender/receiver get to discretize
time as finely as they wish, but there is uncertainty/unreliability
associated with the
delay between when a signal is sent and when it is received. Thus in this
sense, there is timing noise, that is similar in spirit to the signal
noise.
A signal that is sent at time $t$ is received at time $t + \eta(t)$
where $\eta(t)$ could be a random, or adversarial, amount of delay,
but whose typical amount is a fixed constant (independent of the
granularity of the discretization of time chosen by sender/receiver).
Note that this could permute the bits in the sequence sent by the
sender (or do more complex changes).
We consider the effect of this delay on the channel capacity.
For the sake of simplicity (and since this is anyway without loss
of generality) we assume sender only sends a sequence of 0s and 1s.
In addition to delays we also allow the channel to inject the usual
noise.

We discuss our model and results more carefully in Section~\ref{sec:prelims},
but let us give a preview of the results here. It turns out that
the question of when is the channel capacity finite is a function
of several aspects of the model. Note there are two sources of
error - the {\em signal error}, which we simply refer to as {\em noise},
and the {\em timing error}, which we refer to as {\em delay}. As either of these
error sources could be probabilistic or adversarial, we get four
possible channel models. Complicating things further is the
dependence between the two -- does either of the sources of error
know about the error introduced by the other? Each setting ends
up requiring a separate analysis. We taxonomize the many classes
of channels that arise this way, and characterize the capacity
of all the channels. The final conclusion
is the following: If the delays are adversarial, or if the 
delay is stochastic and the noise is adversarial and acts with knowledge 
of the delay, then the channel capacity is finite (Theorem~\ref{thm:finite}), else it is
infinite (Theorem~\ref{thm:infinite}). 
In particular if both sources are adversarial then the channel capacity
is finite; and perhaps most surprisingly and possibly the most
realistic setting, if both sources are probabilistic, then the channel
capacity is infinite: finer discretization always leads to increased
capacity.

\smallskip
\noindent
{\bf Organization:} Section~\ref{sec:prelims} formally describes our model and results.
In Sections~\ref{sec:finite} and \ref{sec:infinite}, we prove our results for finite and infinite channel
capacity regimes respectively. Finally, we give some concluding thoughts in
Section~\ref{sec:conclude}.

\section{Preliminaries, Model, and Results}
\label{sec:prelims}

\subsection{Continuous channels}

We start by describing the basic entities in a communication
system and how performance is measured. Most of the definitions
are ``standard''; the only novelty here is that we allow
sender/receiver to choose the ``granularity'' of time.
We first start with the standard definitions.

\paragraph{Channel(Generic)}
Given a fixed period of time $T$, a {\em signal} is a function
$f:[0,T] \to \R$. We say the signal is bounded if its range is $[0,1]$.
A time $T$ (bounded-input) channel is given by a
(possibly non-deterministic, possibly probabilistic, or
a combination) function $\channel_T : f \mapsto \tilde{f}$
whose inputs is a bounded signal $f:[0,T] \to [0,1]$
and output is a signal $\tilde{f}:[0,T] \to \R$.

A {\em probabilistic channel} is formally given by a transition probability
distribution which gives the probability of outputting
$\tilde{f}$ given input $f$. An {\em adversarial channel} is given by a
set of possible functions $\tilde{f}$ for each input $f$.
We use $\tilde{f} = \channel_T(f)$ as shorthand for $\tilde{f}$
drawn randomly from the distribution specified by $\channel_T(f)$
in the case of probabilistic channels.
For adversarial channels, we use the same notation
$\tilde{f} = \channel_T(f)$ as shorthand for
$\tilde{f}$ chosen adversarially (so as to minimize successful
communication) from $\channel_T(f)$.

Channels can be composed naturally, leading to interesting mixes
of adversarial and stochastic channels, which will lead to interesting
scenarios in this work.

\paragraph{Encoder/Decoder}
Given $T$ and message space $\{0,1\}^{k_T}$, a
time $T$ encoder is a function $E : m \mapsto
f$ where $m \in \{0,1\}^{k_T}$ and $f:[0,T] \to [0,1]$.
Given $T$ and message space $\{0,1\}^{k_T}$, a
time $T$ decoder is a function $D:f \mapsto m$
where $f:[0,T] \to \R$ and $m \in \{0,1\}^{k_T}$.
(More generally, encoders, channels, and decoders
should form composable functions.)

\paragraph{Success Criteria, Rate and Capacity}
The decoding error probability of the system $(E_T,D_T,\channel_T)$
is the quantity
$$\decerrT = \Pr_{m \from \{0,1\}^{k_T}, \channel} [ m \ne
D_T(\channel_T(E_T(m))) ].$$
We say that the communication system is {\em reliable} if
$\lim_{T \to \infty} \{\decerrT\} = 0$.

The (asymptotic) {\em rate} of a communication system is the limit
$\lim \sup_{T \to \infty} \{k_T/T\}$.
The {\em capacity of a channel}, denoted by $\mcap$,
is defined to be the supremum of the rate of the communication
system over all encoding/decoding schemes.

\subsection{Channel models}

We now move to definitions specific to our paper.
We study continuous channels as a limit of discrete channels.
To make the study simple, we restrict our attention to
channels whose signal strength is already discretized, and
indeed we will even restrict to the case where the channel
only transmits bits. The channel will be allowed to err,
possibly probabilistically or adversarially, and $\eps$
will denote the error parameter.

We now move to the more interesting aspect, namely the
treatment of time.
Our model allows the sender and receiver to divide every unit of
time into tiny
subintervals, which we call {\em micro-intervals},
of length $\mu = 1/M$ (for some large integer
$M$), and send arbitrary sequences of $M$ bits per unit of time.
This granularity is compensated for by the fact that
the channel is allowed to introduce relatively large,
random/adversarial, delays.
However the channel is allowed to introduce uncertain delays
into the system, where the delays average to some fixed
constant $\Delta$ which is independent of $\mu$. Given that
all aspects are scalable, we scale time so that $\Delta = 1$.
Again we distinguish between the adversarial case and
the probabilistic case. In the adversarial case
every transmitted symbol may be delayed by up to $1$ unit
of time (or by up to $M$ microintervals).  In the probabilistic
case
every transmitted symbol may be delayed by an amount
which is a random
variable distributed exponentially with mean $1$.
Finally, if multiple symbols end up arriving at the receiver
at the same instant of time, we assume the receiver receives
the sum of the value of the arriving symbols.

We describe the above formally:
\begin{description}
\item[Encoding:] For every $T$, the sender encodes $k_T$
bits as $MT$ bits by applying an encoding function
$E_T : \{0,1\}^{k_T} \to \{0,1\}^{MT}$.
The encoded sequence is denoted $X_1,\ldots,X_{MT}$.
\item[Noise:] The noise is given by a function
$\noise:[MT] \to \{0,1\}$. The effect of the noise is
denoted by the sequence $Z_1,\ldots,Z_{MT}$, where
$Z_j = X_j \oplus \noise(j)$. (We stress that $Z_j$'s 
are not necessarily ``seen'' by any physical entity ---
we just mention them since the notation is useful. Also,
the $\oplus$ is merely a convenient notation and
is not meant to suggest that the bits are elements of
some finite field. We will be thinking of the bits as
integers.)
\item[Delay:] The delay is modeled by a delay function
$\delay:[MT] \to \Z^{\geq 0}$ where
$\Z^{\geq 0}$ denotes the non-negative integers.
\item[Received Sequence]
The final sequence received by the receiver, on noise $\noise$
and delay $\delay$, is the sequence 
$Y_1,\ldots,Y_{MT}$, where $Y_i = \sum_{j\leq i {\rm ~s.t.~} j + \delay(j) =
i} Z_j$ and $Z_j = X_j \oplus \noise(j)$.
\item[Decoding]
The decoder is thus a function $D_T:(\Z^{\geq 0})^{MT} \to \{0,1\}^{k_T}$.
\end{description}

Note that while the notation suggests that the
noise operates on the input first, and then the delay acts on it,
we do not view this as an operational suggestion. Indeed the order
in which these functions ($\noise$ and $\delay$) are chosen will
be crucial to our results.

Our channels are thus described as a composition of
two channels, the noise-channel with parameter $\eps$,
denoted $N(\eps)$
and the delay-channel $D$. Since each of these can be probabilistic
or adversarial, this gives us four options. Furthermore
a subtle issue emerges which is: Which channel goes first?
Specifically if exactly one of the channels is adversarial,
then does it get to choose its noise/delay before or after knowing the
randomness of the other channel. We allow both possibilities
which leads syntactically to eight possible channels (though
only six of these are distinct).

{\bf Notation:} We use $D$ to denote the delay channel and
$N(\eps)$ to denote the noise channel with parameter $\eps$.
We use superscripts of $A$ or
$P$ to denote adversarial or probabilistic errors respectively.
We use the notation $X|Y$ to denote the channel $X$ goes first
and then $Y$ acts (with knowledge of the effects of $X$).
Thus the eight possible channels we consider are
$N^P | D^P$,
$D^P | N^P$,
$D^A | N^P$,
$N^A | D^P$,
$D^P | N^A$,
$N^P | D^A$,
$N^A | D^A$,
and $D^A | N^A$.

\subsection{Our results}

Given that the adversarial channels are more powerful than the corresponding
random channels, and an adversary acting with more information
is more powerful than one acting with less, some obvious
bounds on the capacity of these channels follow:

\begin{equation}
\label{eqn:chain1}
\mcap(D^A | N^A)
=\mcap(N^A | D^A)
\leq \mcap(N^P | D^A)
\leq \mcap(D^A | N^P)
\leq \mcap(D^P | N^P),
\end{equation}
and
\begin{equation}
\label{eqn:chain2}
\mcap(D^A | N^A)
\leq \mcap(D^P | N^A)
\leq \mcap(N^A | D^P)
\leq \mcap(N^P | D^P)
= \mcap(D^P | N^P).
\end{equation}
The equalities above occur because if both channels are adversarial,
or both are probabilistic, then ordering is unimportant.

Our main results are summarized by the
following two theorems.

\begin{theorem}[Finite Capacity Case]
\label{thm:finite}
For every positive $\eps$, the capacity of
the channels $D^A|N(\eps)^A$,\allowbreak $D^P|N(\eps)^A$,\allowbreak
$D^A|N(\eps)^P$,\allowbreak
and $N(\eps)^P|D^A$ are finite. That is, for every
one of these channel types, and $\eps > 0$ there
exists a capacity $C < \infty$ such that for every
$\mu$, a $\mu$-discretized encoder and decoder with
rate $R > C$, there exists a $\gamma > 0$ such that
the probability of decoding error $\decerr \geq \gamma$.
\end{theorem}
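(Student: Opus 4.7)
By the chain of inequalities~\eqref{eqn:chain1} and~\eqref{eqn:chain2}, it suffices to establish finite capacity for the two ``most permissive'' channels among the four, namely $D^A|N(\eps)^P$ and $N(\eps)^A|D^P$; the remaining two cases are dominated by these. Both arguments rely on the same intuition: no matter how finely the sender chooses the microinterval $\mu = 1/M$, the delay magnitude $\Delta = 1$ forces the receiver's effective time resolution to be on the scale of one unit of time, so each unit-time window of the received signal can carry only $O(1)$ bits of information.

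For $D^A|N(\eps)^P$, I would exhibit the following randomized adversary strategy: for each $j$, delay the $j$-th bit independently by $\delta(j)$ drawn uniformly from $\{0,1,\ldots,M-1\}$. Since the worst-case deterministic adversary is at least as damaging as the average randomized one, it is enough to upper bound the capacity of the channel induced by this random delay composed with the stochastic noise. A direct calculation gives $\E[Y_i \mid X] = \frac{1}{M}\sum_{j=i-M+1}^{i}\bigl(X_j(1-\eps)+(1-X_j)\eps\bigr)$---i.e., an affine transformation of $X$ convolved with the uniform kernel of width one unit of time---while $\mathrm{Var}(Y_i \mid X) = O(1)$. Thus the signal seen by the receiver is a smoothed version of $X$ of effective bandwidth $1$ per unit time, so by a Nyquist-type sampling argument it is essentially determined by $\Theta(T)$ samples over time $T$. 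Each such sample carries $O(1)$ bits of mutual information about $X$ (bounded signal range, constant noise variance), so $I(X;Y) = O(T)$, and Fano's inequality forces the rate $k_T/T$ to be at most some constant $C(\eps)$ independent of $M$.

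For $N(\eps)^A|D^P$, the probabilistic (exponential, mean $1$) delay already spreads bit-arrivals over windows of size $\Theta(1)$ unit time, so in each such window the receiver sees $\Theta(M)$ arrivals whose microinterval ordering is largely unrecoverable---the natural sufficient statistic per window is just the count of $1$-bits arriving. The adversary, knowing the delay realization, can flip an $\eps$-fraction of the bits and thereby perturb this count by $\Theta(\eps M)$ in either direction. Since the count ranges over $[0,M]$, only $O(1/\eps)$ distinguishable levels remain per window, giving at most $O(\log(1/\eps))$ bits per unit of time; a Fano-style argument again concludes.

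The principal obstacle in both cases is to translate these intuitive pictures into honest mutual-information upper bounds that hold for \emph{arbitrary} codes. For $D^A|N^P$ one must carefully formalize the claim that aggregating the $Y_i$'s into unit-time blocks produces an approximately sufficient statistic---equivalently, one must bound $I(X;Y)$ using the covariance / singular-value structure of the smoothing operator, verifying that only $\Theta(T)$ of its singular values are of constant order while the remaining $\Theta(MT)-\Theta(T)$ are vanishing. For $N^A|D^P$ the delicate point is the coupling between the adversarial noise and the realized random delay: one must show that for any fixed code of rate above the target constant, there exist two messages whose induced output distributions (over the delay randomness combined with the adversary's best response) are inseparable in total variation, requiring a careful two-stage analysis of the delay's smoothing effect together with the adversary's bit-flipping power.
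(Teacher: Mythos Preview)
Your reduction via the chain inequalities is the right opening move, but you have miswritten one of the two target channels: you name $N(\eps)^A|D^P$, yet the channel you go on to describe---``the adversary, knowing the delay realization''---is $D^P|N(\eps)^A$. This is not a harmless typo: $N^A|D^P$ (noise adversary acts \emph{before} seeing the random delay) is precisely the channel the paper proves has \emph{infinite} capacity in Lemma~\ref{NADP}, so it cannot serve as an upper bound here. The correct pair to bound is $D^A|N^P$ and $D^P|N^A$, matching Lemmas~\ref{DANP} and~\ref{DPNA}. Your sketch for the latter (adversary rounds the per-window one-count to $O(1/\eps)$ levels) is in the spirit of the paper's proof of Lemma~\ref{DPNA}, though the paper has to work harder to ensure the output distribution really depends only on the rounded count.

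The more serious gap is your proposed adversary for $D^A|N^P$. You suggest the delay adversary simply randomize, choosing each $\delay(j)$ uniformly in $\{0,\ldots,M-1\}$, and then invoke a Nyquist-style bandwidth argument on $\E[Y_i\mid X]$. But an oblivious random delay followed by random noise is exactly the kind of channel Theorem~\ref{thm:infinite} shows has \emph{infinite} capacity: the encoding $0^L1^L$ versus $1^L0^L$ with $L=M^{4/5}$ from Lemma~\ref{NADP} distinguishes blocks not through the smoothed mean (which is the same) but through second-order timing effects that survive any memoryless random delay. Your mutual-information heuristic controls only the first moment of $Y$ and misses precisely the ``signal-via-noise'' information that the paper isolates in Definition~\ref{def:svn-channel}. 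The paper's adversary for $D^A|N^P$ is therefore necessarily \emph{adaptive and deterministic}: it batches packets per half-unit interval and tunes $(\tilde n_0,\tilde n_1)$ so that the signature $\eps\tilde n_0+(1-\eps)\tilde n_1$ lands on one of $c$ fixed values, and then separately bounds the residual information carried by the variance via the entropy estimates of Lemma~\ref{lem:danp-inf-capacity}. Without that two-step structure---force the mean onto a finite grid, then bound the variance channel---no finiteness bound goes through.
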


\begin{theorem}[Infinite Capacity Case]
\label{thm:infinite}
There exists a positive $\eps$ such that the
capacity of the channels $D^P|N(\eps)^P$
and $N(\eps)^A|D^P$ are infinite. That is, for every
one of these channel types, there exists an $\eps > 0$
such that for every finite $R$, there exists a $\mu$
and a $\mu$-discretized encoder and decoder achieving rate
$R$, with decoding error probability $\decerr \to 0$.
\end{theorem}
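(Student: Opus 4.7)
My plan is to exhibit, for every finite target rate $R$ and every sufficiently small $\eps>0$, a family of $\mu$-discretized encoder--decoder pairs (with $\mu=1/M$ and $M$ taken large enough in terms of $R$ and $\eps$) whose rate exceeds $R$ and whose decoding error tends to zero with $T$. The guiding intuition is that the probabilistic delay channel $D^P$, even without any bit-noise, has per-unit-time capacity of order $\log M$; adding independent random or adversarial bit-noise of rate $\eps$ should cost only an $M$-independent amount of capacity, leaving the rate still unbounded.

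The base construction is block pulse-amplitude modulation. Partition $[0,T]$ into $T/\tau$ blocks of duration $\tau$ (a positive constant depending on $R$ and $\eps$), each containing $M\tau$ micro-intervals. Within block $b$ the sender encodes a symbol $k_b\in\{0,\dots,L-1\}$, with $L$ of order $\sqrt{M\tau}$, by placing $k_b\cdot M\tau/L$ ones in the first positions of the block. The decoder processes blocks in order: it computes $\hat Y_b=\sum_{i\in W_b}Y_i$ on the natural time window $W_b$, subtracts the expected spillover from already-decoded blocks (using the known exponential delay kernel and the known noise parameter $\eps$), and rounds to the nearest level. In the case $D^P|N(\eps)^P$, conditional on the transmitted symbols, $\hat Y_b$ is a sum of independent Bernoulli arrivals and independent Bernoulli flips of variance $O(M\tau)$, matching the level spacing $\Theta(\sqrt{M\tau})$; Chernoff gives per-block error $e^{-\Omega(1)}$, and a union bound over the $T/\tau$ blocks (with standard control of error propagation in sequential decoding) yields vanishing error at rate $(\log L)/\tau=\tfrac12\log(M\tau)/\tau$, which exceeds $R$ for all sufficiently large $M$.

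For $N(\eps)^A|D^P$ I would concatenate the same inner PAM scheme with an outer error-correcting code (for instance Reed--Solomon over $\{0,\dots,L-1\}$) of constant rate $1-O(\eps)$ and relative distance $\Omega(\eps)$. The decisive point is that $\noise$ is chosen \emph{before} the delays act, so the adversary-induced shift $\Delta_b=\sum_{j\in\mathrm{supp}(\noise)}(1-2X_j)\mathbf{1}[j+\delay(j)\in W_b]$ is, conditional on $\noise$ and the transmitted $X$, a sum of independent $\pm 1$ variables whose parameters are controlled by the known exponential delay distribution. A Bernstein-type concentration argument should show that only an $O(\eps)$-fraction of receiver-blocks have $|\Delta_b|$ exceeding the level spacing; these appear to the outer code as an $O(\eps)$-fraction of symbol errors, which it corrects. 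The resulting rate, $(1-O(\eps))\cdot\tfrac12\log(M\tau)/\tau$, is still unbounded in $M$.

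The hardest step, and the one where the independence of $\noise$ from $\delay$ is decisively used, is the spillover analysis in the second case. A maximally-corrupted sender-block injects $\Theta(M\tau)\cdot e^{-k\tau}$ of excess arrivals into the receiver-block $k$ steps ahead, which exceeds the level margin $\sqrt{M\tau}$ for every $k$ up to $\Theta(\log(M\tau)/\tau)$, so naively the adversary's $\ell_1$-budget $\eps MT$ could contaminate an $O(\eps\log M/\tau)$-fraction of blocks. Achieving the required $O(\eps)$-fraction instead demands a refined argument, most likely one of the following: (i) using a codebook in which the per-block ones-count is pinned near $M\tau/2$ so that the signs $(1-2X_j)$ in $\Delta_b$ average to zero and only the $O(\sqrt{M\tau})$ fluctuation survives; (ii) using the anomalous value of $\hat Y_b$ in a corrupted block to produce a matched overestimate of the spillover, so that the subtracted amount in later blocks leaves only a $\sqrt{M\tau}$ residual; or (iii) interleaving the outer code across sufficiently spaced sender-blocks so that any single corruption affects at most one outer-codeword symbol. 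In the $D^P|N^A$ channel of Theorem~\ref{thm:finite}, the adversary sees $\delay$ and can align its budget with the decoder's windows, so none of these devices can save the scheme and the capacity is forced to be finite; here the adversary's blindness to $\delay$ is exactly what enables unbounded rate.
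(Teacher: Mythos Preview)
Your proposal has a genuine gap in both cases, and the paper's approach is quite different from yours.

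First, in the $D^P|N(\eps)^P$ analysis you claim that per-block error $e^{-\Omega(1)}$ followed by a union bound over $T/\tau$ blocks gives vanishing error. It does not: $e^{-\Omega(1)}$ is a constant, and summing $T/\tau$ copies of it diverges with $T$. You would need either per-block error $o(\tau/T)$ (which your level spacing $\Theta(\sqrt{M\tau})$ against standard deviation $\Theta(\sqrt{M\tau})$ does not give) or an outer code here as well. But even with an outer code, your sequential spillover-subtraction decoder has the error-propagation problem you yourself flag later: a single wrong $\hat k_b$ shifts the subtracted spillover in block $b+k$ by up to $\Theta(M\tau)e^{-k\tau}$, which exceeds the level margin $\sqrt{M\tau}$ for $\Theta(\log(M\tau)/\tau)$ subsequent blocks. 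One symbol error thus cascades into many, and this is not ``standard'' to control.

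Second, for $N(\eps)^A|D^P$ you correctly identify the hard step and list three candidate fixes, but you do not carry any of them through; each hides real work (for instance, option~(ii) requires the corrupted $\hat Y_b$ to be an \emph{accurate} estimate of the corruption, which is not obvious when the adversary concentrates budget unevenly within a block).

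The paper sidesteps all of this by abandoning multilevel PAM and absolute-level decoding entirely. It encodes only \emph{one} bit per pair of length-$L$ blocks (with $L=M^{4/5}$), mapping $0\mapsto 0^L1^L$ and $1\mapsto 1^L0^L$, and achieves rate $M/(2L)=\Theta(M^{1/5})$, still unbounded in $M$. The decoder reads a short window $\Gamma'_{2i-1}$ at the end of the first half to estimate the current queue density $\alpha_i$, then looks at the \emph{sign} of the second-order change $Y(\Gamma'_{2i})-Y(\Gamma'_{2i-1})$ relative to a threshold depending on $\alpha_i$. Because this is a differential measurement taken against a locally estimated baseline, it is insensitive to the absolute queue content and requires no knowledge of previously decoded symbols; there is no spillover subtraction and hence no error propagation. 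The adversary (acting before the delay) can corrupt at most a $1/8$ fraction of blocks heavily enough to flip the decision, a separate $o(1)$ fraction are ``deviant'' or ``heavy,'' and an outer binary code correcting a $5/24$ fraction of errors finishes the argument. The key device you are missing is precisely this differential, history-free detection.
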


The theorems above completely characterize the case where
the capacity is infinite. 
The theorems show that the capacity
is infinite if either both channels are probabilistic (the most
benign case) or if the noise is adversarial but acts without
knowledge of the randomness of the probabilistic delay.
On the other hand, the channel
capacity is finite if the delay is adversarial, or if
the noise is adversarial and acts with knowledge of the
probabilistic delay.

Relying on the ``obvious'' inequalities given earlier, it suffices to
give two finiteness bounds and one ``infiniteness'' bound to
get the theorems above, and we do so in the next two sections.
Theorem~\ref{thm:finite} follows immediately from 
Lemmas~\ref{DPNA}~and~\ref{DANP} (when combined with
Equations~(\ref{eqn:chain1})~and~(\ref{eqn:chain2})).
Theorem~\ref{thm:infinite} follows immediately from 
Lemma~\ref{NADP} (again using 
Equations~(\ref{eqn:chain1})~and~(\ref{eqn:chain2})).

\section{Finite Capacity Regime}
\label{sec:finite}

In this section we prove that the capacity of our channels
are finite, when the delay channel is adversarial
(and acts without knowledge of the noise) or when the
noise is adversarial and acts with knowledge of the delay.
We consider the case of the adversarial noise first,
and then analyze the case of the random errors
adversarial delay first, and
then consider the case of the adversarial noise. 
In both
cases we use a simple scheme to show the capacity is limited.
We show that with high probability, the channel can force
the receiver to receive one of a limited number of signals.

The following simple lemma is then used to lower bound the probability
of error.

\begin{lemma}
\label{lem:capacity}
Consider a transmission scheme with the sender sending
message from a set $S$ with encoding scheme $E$, where the
channel $\channel$
can select a set $R$
of receiver signals such that
$$\Pr_{m \in S, \channel} [\channel(E(m)) \not \in R ] \leq \tau,$$
then the probability of decoding error is at least
$1 - (\tau + |R|/|S|)$.
\end{lemma}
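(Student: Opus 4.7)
The plan is a short pigeonhole argument on the decoder. The intuition is that if the channel can force the received signal into a limited set $R$, then even an optimal decoder has at most $|R|$ distinct images on $R$, so only a $|R|/|S|$ fraction of the uniformly random messages can be decoded correctly conditional on the received signal lying in $R$. The remaining probability mass contributes to decoding error, up to the slack $\tau$ from signals that escape $R$.

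Concretely, I would fix any decoder $D_T$ and define $S' = \{D_T(r) : r \in R\}$, so that $|S'| \leq |R|$. For a uniformly random message $m \in S$, I decompose the success event $\{D_T(\channel(E(m))) = m\}$ according to whether the received signal lies in $R$ or not. If $\channel(E(m)) \in R$ and decoding succeeds, then $m = D_T(\channel(E(m))) \in S'$; since $m$ is uniform on $S$, this contributes probability at most $|S'|/|S| \leq |R|/|S|$. The complementary contribution, where $\channel(E(m)) \notin R$, is at most $\Pr_{m,\channel}[\channel(E(m)) \notin R] \leq \tau$ by hypothesis. Summing gives $\Pr[\text{correct}] \leq \tau + |R|/|S|$, and hence $\decerr \geq 1 - (\tau + |R|/|S|)$.

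The only mild subtlety is that the channel's choice of $R$ is allowed to depend on $E$ (but not on $m$) and that the $\tau$-bound is already averaged over the internal randomness of the channel; both facts are already baked into the hypothesis, so there is no real obstacle. The lemma is essentially a one-line pigeonhole bound dressed up in channel-coding notation, and the main point is that it will be applied later by exhibiting explicit small sets $R$ into which the various adversarial/stochastic channels can funnel their outputs.
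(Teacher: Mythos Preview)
Your proof is correct and essentially identical to the paper's: both fix the decoder, split the success event according to whether the received signal lands in $R$, and bound the two contributions by $|R|/|S|$ (via the image $D(R)$) and $\tau$ respectively. The paper phrases the first event as ``$m = D(r)$ for some $r \in R$'' rather than introducing your set $S'$, but this is the same thing.
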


\begin{proof}
Let $\tilde{R}$ denote the space of all received signals
and
fix the decoding function $D:\tilde{R} \to S$. Now consider the
event that a transmitted message $m$ is decoded correctly.
We claim this event occurs only if only of the two events
listed below occur:
\begin{enumerate}
\item $\channel(E(m)) \not\in R$, which happens with probability
at most $\tau$.
\item $m = D(r)$ for some $r \in R$, which happens with probability
at most $|R|/|S|$.
\end{enumerate}
If neither of the events listed above occur then the received
signal $r \in R$ and $D(r) \ne m$ implying the decoding is incorrect.
The lemma follows immediately.
\end{proof}

\subsection{Random Delay followed by Adversarial Noise $(D^P|N^A)$}

Here we consider the case where the delays are random, with expectation
$1$, and the noise is adversarial. In this section, it is useful
to view the delay channel as a queueing system, under the noise
channel's active control. To explain the queueing system, 
notice that exponential delays lead to a memoryless queue. At
each microinterval of time, a packet enters the queue (the new
bit sent by the sender). And then each packet in the queue chooses
to depart, independent of other packets, with probability $\mu$.
(Note that the exponential delay/memorylessness renders the packets
in the queue indistinguishable in terms of their arrival times.)

For the noise channel also, we will adopt a slightly different 
view. In principle, it is capable of looking at the entire
sequence of bits in the order in which they depart the queue, and
then decide which ones to flip. However our adversary will be much
milder. It will divide time into small intervals with the total number of intervals
being $N= O(T/\eps)$. In each interval it will ``hold'' most
arriving packets, releasing only those that are supposed to leave
the queue. If packets are released during the interval, the noise
adversary sets their value to $0$. With the remaining packets it
inserts them into the queue, with an integer multiple of $\eps M/c$ of
them being set to $1$ (and flipping a few bits to $0$ in the process as needed)
for some constant $c = c(\eps)$.
The remaining departures from the queue will then be transmitted 
untampered to the receiver. We will show that this departure 
process can be simulated by just the knowledge of the number
of $1$s injected into the queue at the end of each interval,
and the number of possibilities is just $(c+1)^N = c(\eps)^{O(T/\eps}))$
which is independent of $M$. The adversary will be able to carry out
its plan with probability $1 - \exp(-T)$, giving us the final result.
The following lemma and proof formalize this argument.

\begin{lemma}
\label{DPNA}
For every positive $\eps$, there exists a capacity
$C = C(\eps)$ such that the capacity of
the channel $D^P|N(\eps)^A$ is bounded by $C$.
Specifically, for every rate $R > C$,
for every $M$ (and $\mu = 1/M$),
every $T$, every $k_T > R\cdot T$,
and every pair of encoding/decoding functions
$E_T: \{0,1\}^{k_T} \to \{0,1\}^{MT}$
and
$D_T: (\Z^{\geq 0})^{MT} \to \{0,1\}^{k_T}$,
the decoding error probability
$\decerr = 1 - \exp(-T)$.
\end{lemma}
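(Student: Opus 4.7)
The plan is to follow the outline in the paragraph preceding the lemma: exhibit an adversarial noise strategy (allowed to depend on the realized delays) that, with probability $1 - \exp(-\Omega(T))$, forces the received sequence $Y$ into a set $\mathcal{R}$ whose size is at most $(c+1)^{T/\eps}$ for some constant $c = c(\eps)$. Setting $C := \eps^{-1}\log_2(c+1)$, Lemma~\ref{lem:capacity} with $\tau = \exp(-\Omega(T))$ and $|\mathcal{R}|/|S| \leq (c+1)^{T/\eps}/2^{RT} = 2^{-(R-C)T}$ then yields $\decerr \geq 1 - o(1)$ whenever the rate $R > C$. The crucial feature is that $C$ is \emph{independent} of $\mu = 1/M$, so no matter how finely time is subdivided, the achievable rate is capped at $C$.

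To construct $\mathcal{R}$, I would first cast the exponential delay channel as a discrete infinite-server queue: at each microinterval a fresh bit enters the queue and every packet already in the queue independently departs with probability $\mu$, giving geometric residence times that match the continuous mean-$1$ exponential. Partition $[0,T]$ into $N = T/\eps$ intervals of length $\eps$ each. The adversary works interval by interval: within interval $\ell$, it \emph{releases} (sets $Z_j = 0$) every bit that both enters the queue and departs during interval $\ell$, and \emph{holds} the remaining bits entering in interval $\ell$. At the end of the interval it commits to a batch count $k_\ell \in \{0,1,\ldots,c\}$ and sets exactly $k_\ell \lfloor \eps M/c \rfloor$ of the held bits to $1$, with the remaining held bits set to $0$, choosing which held bits to label $1$ via a canonical rule (for example, the held bits with the $k_\ell \lfloor \eps M/c \rfloor$ earliest realized departure times). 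By memorylessness of the geometric delays, deferring a packet's effective arrival to the end of interval $\ell$ does not alter the distribution of its future departures, so this $Z$-assignment is consistent with the realized delays. Conditional on the delays, the entire received sequence $Y$ becomes a deterministic function of $(k_1,\ldots,k_N)$, giving $|\mathcal{R}| \leq (c+1)^N$.

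The probability-of-success bound reduces to standard concentration for the discrete infinite-server queue. With probability $1 - \exp(-\Omega(T))$, the queue length stays in a band $[c_1 M, c_2 M]$ throughout $[0,T]$ and each interval experiences $\Theta(\eps M)$ arrivals and $\Theta(\eps M)$ departures; in particular the number of held bits per interval is comfortably at least $c \lfloor \eps M/c \rfloor$, so the batch-injection step is always executable. Union-bounding over $N = O(T/\eps)$ intervals preserves $\tau = \exp(-\Omega(T))$. The main obstacle I anticipate is making the statement ``$Y$ is determined by $(k_1,\ldots,k_N)$'' fully rigorous: two valid bit-assignments respecting the same batch counts can a priori produce different received sequences, so the canonical placement rule for the $1$'s among the held bits must be described carefully enough to yield a deterministic map from $(k_1,\ldots,k_N)$ (plus the realized delays) to $Y$. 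A secondary issue is bookkeeping for the noise budget: the rounding modulus $\lfloor \eps M/c \rfloor$ is chosen both to discretize the batch counts (giving $c+1$ choices per interval) and to keep the adversary's total bit flips within its allotted $\eps$-noise budget, and pinning down the exact constant $c = c(\eps)$ that achieves both objectives is the last piece of the technical work.
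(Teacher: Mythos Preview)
Your proposal is correct and matches the paper's proof almost exactly: partition into $\Theta(T/\eps)$ intervals, zero out the early departures, round the number of $1$'s among the held packets to one of $O(1/\eps)$ canonical values, and apply Lemma~\ref{lem:capacity}. Both obstacles you flag are resolved just as you suggest (the paper's canonical representative is simply the string $1^{\tilde n_i}0^{L-\tilde n_i}$, and taking interval length $L=\eps M/5$ with rounding modulus $\eps' L$ keeps the per-interval flips within the $\eps$-budget); the one point you should make explicit is that $k_\ell$ must be the rounding of the \emph{input} Hamming weight in interval $\ell$, not a free choice of the adversary, since otherwise the flip count is $\Theta(\eps M)$ per interval rather than $O(\eps M/c)$.
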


\medskip
\noindent
\begin{proof} We start with a formal description of the channel action, and then proceed 
to analyze the probability of decoding error and channel capacity.

\medskip
\noindent
\paragraph{Channel action:}
Let $X_1,\ldots,X_{MT}$ denote the $MT$ bit string being sent
be the sender. We will use $Z_j = X_j \oplus \noise(j)$ to 
denote the value of the
$j$th bit after noise (even though the noise acts after the
delay and so $Z_j$ may not be the $j$th bit received by the
receiver). We let $\delay:[MT] \to \Z^{\geq 0}$ denote the 
delay function.

Let $\eps' = \eps/5$ and $L = \eps' M$. 
The noise adversary partitions the $MT$ microintervals into
$T/\eps'$ intervals of length $L$ each, where the $i$th
interval $\Gamma_i = \{(i-1)L+1,\ldots,iL\}$. For every index $i \in \{1,\ldots,T/\eps'\}$, the
adversary acts as follows to set the noise function for
packets from $\Gamma_i$:
\begin{enumerate}
\item Let $n_i$ denote the Hamming
weight of the string $X_{(i-1)L+1}\ldots X_{iL}$, i.e., the weight
of the arrivals in the queue in interval $i$.
\item 
Let $\tilde{n}_i$ denote the rounding down of $n_i$
to an integer multiple of $\eps' \cdot L = (\eps')^2 \cdot M$ (we
assume all these are integers).
\item
Let $R_i$ denote the set of packets that arrive and leave
in the $i$th interval, i.e., $R_i = \{j \in \Gamma_i | j+\delay(j) \in
\Gamma_i\}$. 
\item
For every $j \in R_i$, the adversary sets $Z_j = 0$ (or $\noise(j) = X_j$).
Let $y_i = |R_i|$ and let $\hat{n}_i = \min\{\tilde{n}_i, L - y_i\}$.
\item
The adversary flips the minimum number of packets from $\Gamma_i \setminus R_i$
so that exactly $\hat{n}_i$ of these are ones.
\end{enumerate}
If at any stage the adversary exceeds its quota of $\eps MT$
errors it stops flipping any further bits.

\medskip
\noindent
{\bf Error Analysis:}
We claim first that the probability that the adversary stops
due to injecting too many errors is exponentially low.
This is straightforward to bound. Notice that the number
of bits of flipped in the $i$th interval due to early departures, 
is at most $y_i$, and $E[y_i] \leq
\frac12 \eps' L$. The number of bits flipped for packets 
that wait in the queue (i.e., from $\Gamma_i \setminus R_i$) is at most $\max\{y_i,n_i -
\tilde{n_i}\}$. Again the expectation of this is bounded by
the expectation of $y_i + (n_i - \tilde{n_i})$ which is at most
$\frac32(\eps') L$. Thus adding up the two kinds of errors, we find the 
expected number of bits flipped in the $i$th interval is at
most $\frac{5}2 \eps' L$. Summing over all intervals and applying
Chernoff bounds, we find the probability that we flip more than
$(5 \eps' = \eps)$-fraction of the bits is exponentially small in $T$.

\medskip
\noindent
{\bf Capacity Analysis:}
For the capacity analysis, we first note that the departure
process from the delay queue (after the $\noise$ function has
been set) is completely independent of the encoding $X_1,\ldots,X_{MT}$,
conditioned on $\tilde{n}_1,\ldots,\tilde{n}_{T/\eps'}$ and
on the event that the adversary does not exceed its noise bounds.
Indeed for any fixing of the $\delay$ function where the adversary
does not exceed the noise bound, the output of $D^P|N^A$ channel
on $X_1,\ldots,X_{MT}$ is the same as on the string
$\tilde{X}_1\ldots \tilde{X}_{MT}$, where for each $i$, the
string $\tilde{X}_{(i-1)L+1}\ldots \tilde{X}_{iL}$ is set to 
$1^{\tilde{n_i}} 0^{L-\tilde{n_i}}$.
Furthermore, note that the number of possible values of $\tilde{n_i}$
is at most $1/\eps'$.
We thus conclude that with all but exponentially small probability,
the number of distinct distributions received by the receiver (which
overcounts the amount of information received by the receiver) is
at most $(1/\eps')^{T/\eps'} = (1/\eps)^{O(T/\eps)}$.
An application of Lemma~\ref{lem:capacity} now completes the proof.

\end{proof}

\subsection{Adversarial Delay followed by Random Noise ($D^A|N^P$)}

\begin{lemma}
\label{DANP}
For every positive $\eps \le \frac 12$, there exists a capacity
$C = C(\eps)$ such that the capacity of
the channel $D^A|N(\eps)^P$ is bounded by $C$.
Specifically, for every rate $R > C$, there exists a $\gamma > 0$
and $T_0 < \infty$
such that
for every $M$ (and $\mu = 1/M$),
every $T \geq T_0$,
and every pair of encoding/decoding functions
$E_T: \{0,1\}^{k_T} \to \{0,1\}^{MT}$
and
$D_T: (\Z^{\geq 0})^{MT} \to \{0,1\}^{k_T}$,
the decoding error probability
$\decerr > \gamma$ if $k_T > R\cdot T$.
\end{lemma}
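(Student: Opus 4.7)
The plan is to mirror the structure of Lemma~\ref{DPNA}: exhibit an adversarial delay schedule that forces the received distribution to effectively depend on the input $X$ only through a short, discrete summary, and then apply Lemma~\ref{lem:capacity} to deduce a rate upper bound that is independent of $M$. Since the adversary here controls delay (not noise), the technical challenge shifts from bit-flipping to queue management: he must achieve his quantization entirely through delaying/colliding bits.

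First I would partition the $MT$ microintervals into intervals $\Gamma_1,\ldots,\Gamma_{T/\eps'}$ of length $L=\eps' M$ (for an appropriate $\eps'=\eps'(\eps)$), so that every unit of time contains $1/\eps'$ intervals and the adversary's $M$-microinterval delay budget corresponds to holding a bit for up to $1/\eps'$ intervals. Within each interval $\Gamma_i$, let $n_i$ denote the number of 1-bits of $X|_{\Gamma_i}$. The adversary maintains a queue and, at the right-hand boundary of each interval, releases a burst whose 1-bit count $\tilde n_i$ is a quantized version of the cumulative 1-count (rounded to a multiple of some coarsening $\delta=\delta(\eps)$), padding out with 0-bits from the queue so that each burst has the same total size and carrying the residual $n_i+C_i-\tilde n_i\in[0,\delta)$ forward as the carry $C_{i+1}$. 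Because $M\ge L/\eps'$, this carry remains within the delay budget.

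The key claim I would then prove is that, under this adversarial strategy, the joint distribution of the received signal $(Y_1,\ldots,Y_{MT})$ depends on $X$ only through the sequence $(\tilde n_1,\ldots,\tilde n_{T/\eps'})$, each of which takes at most $K=K(\eps)$ values independent of $M$. There are therefore at most $K^{T/\eps'}$ distinct output distributions. A Chernoff bound on the binomial noise of each packed burst shows that for each such distribution the ``typical'' received signals concentrate in a set of size $2^{o(T)}$; taking $R$ to be the union of these typical sets, Lemma~\ref{lem:capacity} applied with $|R|\le K^{T/\eps'}\cdot 2^{o(T)}$ and $\tau$ exponentially small yields a constant rate upper bound $C(\eps)=O(\log K/\eps')$, which is the constant claimed in the lemma.

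The main obstacle is the balancing of parameters. Unlike Lemma~\ref{DPNA}, where the adversary can flip bits directly to achieve a canonical pattern, here the adversary must absorb all quantization error purely through the queue, so the feasibility argument has to bound the worst-case queue size against the delay budget while simultaneously ensuring that the binomial noise on each burst does not blur adjacent values of $\tilde n_i$. Concretely, the gap between adjacent $\tilde n_i$ levels must exceed the noise standard deviation $\sqrt{L\eps}$, and the carry must be bounded so that no held bit exceeds delay $M$. Verifying that a single choice of $(L,\delta,K)$ as functions of $\eps$ satisfies both constraints uniformly in $M$ is where I expect the bulk of the care to be required.
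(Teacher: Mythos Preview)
Your plan has a real gap: the counting argument via Lemma~\ref{lem:capacity} does not close here, and the reason is exactly the phenomenon the paper isolates as the ``signal-via-noise'' channel.

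First, the queue scheme you describe is infeasible. You want each burst to have fixed total size $L$ and to contain exactly $\tilde n_i$ ones, carrying forward only the residual ones. But the delay adversary cannot flip bits; it must release the very $0$'s and $1$'s it received. If the total burst size equals the total arrival size $L$, then the total carry is identically zero, so a nonnegative $1$-carry forces a nonpositive $0$-carry---you would need $0$-bits you do not have. Any feasible scheme (as in the paper) can at best hold back bits of whichever type is in surplus, so the released burst has some composition $(\tilde n_0(i),\tilde n_1(i))$ that is \emph{not} determined by a single quantized count; the adversary can only arrange that the \emph{signature} $\eps\,\tilde n_0(i)+(1-\eps)\,\tilde n_1(i)$ lies near one of $c$ values.

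Second, and more fundamentally, even if the signature is quantized, the received value at a burst is a sum of $\Theta(M)$ independent Bernoulli variables and hence has standard deviation $\Theta(\sqrt{M})$. So the ``typical set'' of received vectors you propose to place in $R$ has size on the order of $(\sqrt{M})^{\Theta(T)}$, which grows with $M$ and gives no $M$-independent rate bound via Lemma~\ref{lem:capacity}. Worse, two inputs with the same signature sequence but different $(\tilde n_0,\tilde n_1)$ produce bursts with \emph{different variances}, so the receiver can in principle extract further information from the noise itself. Thus the step ``the joint distribution of $(Y_1,\ldots,Y_{MT})$ depends on $X$ only through $(\tilde n_1,\ldots,\tilde n_{T/\eps'})$'' is false for any feasible delay strategy, and the $2^{o(T)}$ typical-set claim is false regardless.

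The paper handles precisely this: it defines the $(M,\eps,c)$ ``signal-via-noise'' channel whose input is the pair $(\tilde n_0,\tilde n_1)$ constrained to a fixed signature, and proves (Lemma~\ref{lem:danp-inf-capacity}) via entropy estimates that its information capacity is $O_\eps(1)$ independent of $M$---essentially because both $H(Y)$ and $H(Y\mid X)$ are $\tfrac12\log M + O(1)$. It then couples this with the standard converse (Lemma~\ref{lem:danp-func-capacity}) rather than the counting Lemma~\ref{lem:capacity}. This information-theoretic step is the missing idea in your outline; the parameter-balancing you flag as the ``main obstacle'' is comparatively routine once it is in place.
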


\noindent
{\bf Proof Idea:}
We give the capacity upper bound in two steps. In the first
step we create an adversarial delay function that attempts
to get rid of most of the ``detailed'' information being
sent over the channel. The effect of this delay function
is that most of the information being carried by the channel
in $M$ microintervals can be reduced to one of a constant
(depending on $\eps$) number of possibilities -- assuming
the errors act as they are expected to do. The resulting process
reduces the information carrying capacity of the channel
to that of a classical-style (discrete, memoryless) channel, 
and we analyze the capacity
of this channel in the second step. We give a few more details
below to motivate the definition of this classical channel.

We think of the delay function as a ``queue'', and the bits
being communicated as ``packets'' arriving/departing from this queue.
We call a packet a {\em $0$-packet} if it was a zero under the encoding
and as a {\em $1$-packet} if it was a one under the encoding. Note
that both types of packets, on release, get flipped with probability $\eps$
and the receiver receives one integer per time step representing
the total number of ones received.
The delay adversary clusters time into many large intervals and holds on
to all packets received during an interval, and releases most
of them at the end of the interval.
In particular if it releases $\tilde{n_0}$
$0$-packets  and $\tilde{n_1}$ $1$-packets at the end of an
interval, it makes sure that $\eps \tilde{n_0} + (1-\eps) \tilde{n_1}$
takes on one of a ``constant'' number of values independent of $M$.
(The actual value will be within $\pm \frac12$ of 
an integer multiple of $M/c$ due to integrality issues, but
in this discussion we pretend we get an exact multiple of $M/c$.)
Note that the quantity
$\eps \tilde{n_0} + (1-\eps) \tilde{n_1}$ denotes the expected
value of the signal received by the receiver when
$\tilde{n_0}$
$0$-packets  and $\tilde{n_1}$ $1$-packets are released, and so
we refer to this quantity as the {\em signature} of the interval. If the
errors were ``deterministic'' and flipped exactly the expected
number of bits, then the channel would convey no information beyond
the signature, and the total number of possible signatures over
the course of all intervals would dictate the number of possible
messages that could be distinguished from each other.

However the errors are not ``deterministic'' (indeed --- it is not
even clear what that would mean!). They are simply Bernoulli flips
of the bits being transmitted, and it turns out that different
pairs $(\tilde{n_0},\tilde{n_1})$ with the same signature can be
distinguished by the receiver due to the fact that they have
different variance. This forces us to quantify the information
carrying capacity of this ``signal-via-noise'' channel.

In the sequel, we first introduce this ``signal-via-noise'' channel
(Definition~\ref{def:svn-channel})
and bound its capacity (Lemmas~\ref{lem:danp-inf-capacity}~and~\ref{lem:danp-func-capacity}).
We then use this bound to give a proof of Lemma~\ref{DANP}.

\subsubsection{The ``signal-via-noise'' channel}

We introduce the ``signal-via-noise'' channel which is a discrete memoryless channel, 
whose novelty is in the fact that it
attempts to convey information using the variance
of the signal. We recall below some basic definitions from information
theory which we will use to bound the capacity of this channel. 
(These can also be found in \cite[Chapter 2]{CoverThomas}.)

Let $X$ be a random variable taking values from
some set $\calx$, and let $p_x$ denote the probability that $X = x$.
Then the {\em entropy} of $X$, denoted $H(X)$,
is the quantity $H(X) = \sum_{x \in \calx} p_x \log (1/p_x)$.

Let $X$ and $Y$ be jointly distributed random variables
with $X$ taking values from $\calx$ and $Y$ from $\caly$.
Let $p_{x,y}$ denote the probability that $X = x$ and
$Y = y$. For $y \in \caly$, let $H(X|y)$ denote the entropy of
$X$ conditioned on $Y = y$, that is, $H(X|y) = \sum_{x \in \calx} p_{x|y} \log (1/p_{x|y})$
where $p_{x|y} = p_{x,y}/(\sum_{z \in \calx} p_{z,y})$
denotes the probability that $X = x$ conditioned on $Y = y$.
Then the {\em conditional entropy} of $X$ given $Y$, denoted
$H(X|Y)$, is the quantity $H(X|Y) = \E_{y \in \caly} [ H(X|y) ]$.
The {\em mutual information} between $X$ and $Y$, denoted
$I(X;Y)$, is the quantity $I(X;Y) = H(X) - H(X|Y)$.
We will rely on the following basic fact.

\begin{proposition} {\rm \cite[Chapter 2]{CoverThomas}}
\begin{enumerate}
\item $H(X,Y) = H(X) + H(Y|X) = H(Y) + H(X|Y)$.
\item $I(X;Y) = I(Y;X) = H(Y) - H(Y|X)$.
\end{enumerate}
\end{proposition}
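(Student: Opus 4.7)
The plan is to prove both parts directly from the definitions of joint entropy, conditional entropy, and mutual information, using nothing more than the factorization $p_{x,y} = p_x \cdot p_{y|x}$ of the joint distribution and the linearity of expectation.

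For part~1, I would start by expanding the definition of $H(X,Y)$ as $\sum_{x,y} p_{x,y} \log(1/p_{x,y})$ and substituting $p_{x,y} = p_x \cdot p_{y|x}$. The logarithm then splits as $\log(1/p_x) + \log(1/p_{y|x})$, and the resulting sum cleanly separates into two pieces. The first piece, $\sum_{x,y} p_{x,y} \log(1/p_x)$, collapses to $\sum_x p_x \log(1/p_x) = H(X)$ after summing out $y$ using $\sum_y p_{x,y} = p_x$. The second piece is exactly the expectation $\E_{y}[H(X|y)]$ rewritten with the roles of $X$ and $Y$ swapped, i.e., $H(Y|X)$ by definition. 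This gives $H(X,Y) = H(X) + H(Y|X)$; the second equality in part~1 is then immediate by symmetry of the joint entropy in its two arguments (one can repeat the same calculation factoring $p_{x,y} = p_y \cdot p_{x|y}$ instead).

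For part~2, the symmetry $I(X;Y) = I(Y;X)$ follows by combining the two decompositions of $H(X,Y)$ from part~1: since $H(X) + H(Y|X) = H(Y) + H(X|Y)$, rearranging yields $H(X) - H(X|Y) = H(Y) - H(Y|X)$, which is precisely $I(X;Y) = I(Y;X)$. The alternate expression $I(X;Y) = H(Y) - H(Y|X)$ is then just the right-hand side of this identity, matching the definition with roles swapped.

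There is no real obstacle here; the only subtlety is being careful about the conditional distributions being well-defined (one handles zero-probability events by the standard convention $0 \log (1/0) = 0$), and about freely interchanging the order of summation over a finite or countable support, which is justified since all summands are non-negative. The whole argument is a few lines of manipulation of sums.
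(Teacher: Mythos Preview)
Your proof is correct and is exactly the standard argument. Note, however, that the paper does not actually give a proof of this proposition: it is stated with a citation to \cite[Chapter 2]{CoverThomas} and used as a black box. So there is no ``paper's own proof'' to compare against; your derivation from the definitions is the textbook one and would be entirely appropriate here.
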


A discrete channel $\calc$ is given by a triple $(\calx,\caly,\calp)$, where
$\calx$ denotes the finite set of input symbols,
$\caly$ denotes the finite set of output symbols, and $\calp$ is a stochastic matrix with $\calp_{ij}$
denoting the probability that the channel outputs $j \in \caly$
given $i \in \calx$ as input. We use $\calc(X)$ to denote the output of this channel on input $X$.
The {\em information capacity} of such a channel is
defined to be the maximum, over all distributions $\cald$
on $\calx$, of the mutual information $I(X;Y)$
between $X$ drawn according to $\cald$ and $Y = \calc(X)$.

The information capacity turns out to capture the operational
capacity (or just capacity as introduced in Section~\ref{sec:prelims})
of a channel when it is used many times (see
Lemma~\ref{lem:danp-func-capacity} below).
Our first lemma analyzes the information capacity of the
``signal-via-noise'' channel, which we define formally below. 

In what follows, we fix positive integers $M$ and
$c$ and a rational $\eps$.

\begin{definition}
\label{def:svn-channel}
For integers $M, c$ and $\eps > 0$, the collection
of $(M,\eps,c)$-channels is given by $\{\calc_\mu | \mu \in
\{M/c,\ldots,M\} \}$, where the channel
$\calc_\mu = (\calx_\mu, \caly_\mu, \calp_\mu)$ is
given by
$$\calx_\mu = \{(a,b) \in \Z^{\geq 0} \times \Z^{\geq 0} \mid
\mu-\frac12 < \eps a + (1-\eps) b \leq \mu+\frac12, 
0 \leq a+b \leq M\}, \caly_\mu = \{0,\ldots,M\},$$
and $\calp_\mu$ is the distribution that, on input $(a,b)$,
outputs the random variable $Y = \sum_{i=1}^a U_i +
\sum_{j=1}^b V_j$, where the $U_i$'s and $V_j$'s are
independent Bernoulli random variable with $\E[U_i] = \eps$
and $\E[V_j] = 1 - \eps$.
\end{definition}

Note that the expectation of the output of the channel $\calc_\mu$
is roughly $\mu$, and the only ``information carrying capacity'' is derived
from the fact that the distribution over $\{0,\ldots,M\}$ is
different (and in particular has different variance) depending
on the choice of $(a,b) \in \calx_\mu$.
The following lemma shows that this information carrying
capacity is nevertheless bounded as a function of $\eps$ and $c$
(independent of $M$). Later we follow this lemma with a standard
one from information theory showing that the information capacity
does bound the functional capacity of this channel.

\begin{lemma}
\label{lem:danp-inf-capacity}
For every $0 < \eps \le \frac 12$ and $c < \infty$, there exists
$C_0 = C_0(c,\eps)$
such that for all $M$ the information capacity of 
every $(M,\eps,c)$-channel is at most $C_0$.
\end{lemma}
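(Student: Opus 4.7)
My plan is to bound the information capacity via the standard decomposition $I(X;Y) = H(Y) - H(Y \mid X)$, exploiting the defining feature of these channels: since $E[Y \mid X = (a,b)] = \eps a + (1-\eps) b$ lies in $(\mu - \tfrac12, \mu + \tfrac12]$ for every admissible input, every conditional output distribution has essentially the same mean, and all information about $X$ must be carried by the \emph{variance} $\sigma_{a,b}^2 = \eps(1-\eps)(a+b)$. Because the constraint $\eps a + (1-\eps) b \approx \mu$ together with $a + b \leq M$ and $\mu \geq M/c$ forces $a+b$ to lie within an $O(c)$-factor at the top and bottom, the dynamic range of conditional variances is bounded independent of $M$, which should translate into a capacity bound depending only on $c$ and $\eps$.

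For the upper bound on $H(Y)$, I will first use the law of total variance (the between-class term contributes at most $\tfrac14$ since the conditional means lie in a unit-length interval) to get $\mathrm{Var}(Y) \leq \eps(1-\eps) M + \tfrac14$, then apply the standard dithering trick: letting $U$ be uniform on $[0,1]$ independent of $Y$, one has $h(Y+U) = H(Y)$ and $\mathrm{Var}(Y+U) = \mathrm{Var}(Y) + \tfrac{1}{12}$, so the Gaussian maximum-entropy bound yields $H(Y) \leq \tfrac12 \log\bigl(2\pi e(\eps(1-\eps) M + O(1))\bigr)$. For the matching lower bound on $H(Y \mid X = (a,b))$, I will invoke a quantitative local central limit theorem: the sum of $s := a+b$ independent Bernoullis with parameters in $\{\eps, 1-\eps\}$ is close in total variation to a discretized Gaussian of variance $\sigma_{a,b}^2$ at rate $O(1/\sqrt{s})$, so $H(Y \mid X=(a,b)) \geq \tfrac12 \log\bigl(2\pi e\,\eps(1-\eps) s\bigr) - C(\eps)$ whenever $s$ exceeds a threshold $s_0(\eps)$.

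To combine, I use $\eps \leq \tfrac12$ together with $\eps a + (1-\eps) b > \mu - \tfrac12$ to conclude $(1-\eps) s \geq \eps a + (1-\eps) b > M/c - \tfrac12$, so every admissible $s$ satisfies $s \geq \Omega(M/c)$ and hence $\sigma_{a,b}^2 \geq \Omega(\eps M/c)$. The ratio $\mathrm{Var}(Y)/\sigma_{\min}^2$ is therefore bounded by $O(c)$ uniformly, and combining the two entropy bounds gives $I(X;Y) \leq \tfrac12 \log c + O_\eps(1)$ as soon as $M$ exceeds a threshold $M_0(c,\eps)$ ensuring $s \geq s_0(\eps)$. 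For smaller $M$, the alphabet $\calx_\mu$ itself has size bounded by a constant $K(c,\eps)$, and the trivial bound $I(X;Y) \leq \log|\calx_\mu|$ closes the argument. The main obstacle is the local CLT step: making the entropy approximation $H(Y \mid X = (a,b)) = \tfrac12 \log(2\pi e\,\sigma_{a,b}^2) - O_\eps(1)$ fully explicit and uniform in the composition $(a,b)$ with $s \geq s_0(\eps)$ takes some care (via Berry--Esseen or Edgeworth expansions), but no genuinely new ingredient is needed beyond ensuring every constant depends only on $\eps$ and never on $M$.
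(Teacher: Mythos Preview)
Your proposal is correct and follows the same overall strategy as the paper: decompose $I(X;Y) = H(Y) - H(Y\mid X)$, show both terms equal $\tfrac12\log M$ up to additive constants depending only on $(c,\eps)$, and use the constraint $\eps a + (1-\eps)b > \mu - \tfrac12 \geq M/c - \tfrac12$ together with $\eps \leq \tfrac12$ to force $a+b = \Omega(M/c)$, so that the conditional variance is never too small.

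The techniques you invoke for the two entropy bounds differ somewhat from the paper's. For $H(Y\mid X=(a,b))$ the paper also appeals to Berry--Ess\'een, but only to extract a uniform pointwise bound $\Pr[Y=j\mid X=(a,b)] = O_{c,\eps}(M^{-1/2})$ and then lower-bounds entropy by min-entropy; your local-CLT route to $\tfrac12\log(2\pi e\,\sigma_{a,b}^2) - O_\eps(1)$ is essentially the same idea with slightly sharper bookkeeping. The more substantive difference is for $H(Y)$: the paper partitions $\caly_\mu$ into shells $\{j : i\sigma \leq |j-\mu| < (i+1)\sigma\}$, bounds the mass of each shell by Chernoff, and sums the contributions; your dithering-plus-Gaussian-max-entropy argument is cleaner and more direct, and also makes transparent why the bound is governed by $\mathrm{Var}(Y)$ rather than by any specific tail shape. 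Both routes yield $H(Y) \leq \tfrac12\log M + O(1)$, so the final capacity bounds coincide.
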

\begin{proof}
The lemma follows from the basic inequality for any pair of random
variables $X$ and $Y$ that $I(X;Y) = H(Y) - H(Y | X)$,
where $H(\cdot)$ denotes the entropy function and $H(\cdot | \cdot)$
denotes the conditional entropy function.
Thus to upper bound the capacity it suffices to give a lower
bound on $H(Y|X)$ and an upper bound on $H(Y)$.

We prove below some rough bounds that suffice for us.
Claim~\ref{clm:channel-one} proves $H(Y|X) \geq \frac12\log_2 M
- C_1(c,\eps)$ and
Claim~\ref{clm:channel-two} proves $H(Y) \leq \frac12\log_2 M
+ C_2(c,\eps)$.
It immediately follows that the capacity of the channel $\calc_\mu$
is at most $C_1(c,\eps) + C_2(c,\eps)$.
We now proceed to prove Claims~\ref{clm:channel-one}
~and~\ref{clm:channel-two}.

\begin{claim}
\label{clm:channel-one}
There exists $C_1(c,\eps)$ such that
for every $(a.b) \in \calc_\mu$,
$H(Y|X = (a,b)) \geq \frac12\log_2 M - C_1(c,\eps)$.
\end{claim}
\begin{proof}
This part follows immediately from the following claim
which asserts that for every $j \in \caly_\mu$,
$\Pr[Y = j|X = (a,b)] \leq 8(c/\eps)^{3/2} M^{-\frac12}$.
We thus conclude

$$H(Y|X = (a,b)] \geq \min_j \log \left( \frac{1}{Pr[Y = j|X = (a,b)]} \right)
\geq \frac12 \log M - \frac 32 \log \left(\frac{c}{\eps}\right) - 3.$$

\begin{claim}
For every $j \in \caly_\mu$,
$\Pr[Y = j|X = (a,b)] \leq 8(c/\eps)^{3/2} M^{-\frac12}$.
\end{claim}
\begin{proof}
We use the Berry-Ess\'een theorem~\cite[Chapter 16]{FellerV2'67},
and in particular the following version of the theorem.

If $Z_1,\ldots,Z_\ell$ are random variables with mean
zero such that $\sum_{i=1}^\ell E[Z_i^2] \leq \sigma^2$
and $\sum_{i=1}^\ell E[|Z_i|^3] \leq \rho$, then for
every $\alpha \in \R$,
$$\left|\Pr\left[\frac1\sigma\cdot \left(\sum_{i=1}^\ell Z_i \right) \leq \alpha\right] - \Phi(\alpha)\right|
\leq \rho/\sigma^3.$$
An immediate implication is that for $\alpha \leq \beta$, we have
$$ \Pr\left[ \alpha \leq \frac1\sigma \cdot \left(\sum_{i=1}^\ell
Z_i \right) \leq \beta \right] \leq \Phi(\beta) - \Phi(\alpha) +
2\rho/\sigma^3.$$
In our setting, $\ell = a+b$, and $Z_i = U_i - \eps$
for $1 \leq i \leq a$ and $Z_i = V_i - (1 -\eps)$ for
$a+1 \leq i \leq \ell$. We have $\sigma^2 = (a+b)\eps(1-\eps)
\geq \eps(1-\eps)M/c\geq \eps M/(2c)$. Finally $\rho$ can be crudely upper bounded
by $M$ (since $\ell \leq M$ and $-1 \leq Z_i \leq 1$).
Let $\tilde{\mu} = \eps a + (1-\eps) b$.
Then, if we set $\alpha = \beta = (j-\tilde{\mu})/\sigma$,
we find that $\Pr[\sum_{i=1}^\ell Z_i = j] \leq  8(c/\eps)^{3/2} M^{-\frac12}$.
\end{proof}

\end{proof}

\begin{claim}
\label{clm:channel-two}
There exists $C_2$ such that
for every random variable $X$ supported on $\calx_\mu$
and $Y = \calc_\mu(X)$,  we have
$H(Y) \leq \frac12\log_2 M + C_2$.
\end{claim}

\begin{proof}
Let $\sigma$ be such that
$\sigma^2 = M\eps(1-\eps)$
is an upper bound on the variance of $Y$ conditioned on $X$.
We use this upper bound to bound the probability of $Y$
being too far from $\mu$, and in turn use this to bound its
entropy.

We partition $\caly_\mu$ into a sequence of sets
$S_0,S_{2},S_{\sigma-1},\sinf$ as defined below.
$$S_0 = \{j \in \caly_\mu \mbox{ s.t. } |j - \mu | < 2\sigma\},$$
$$\sinf = \{j \in \caly_\mu \mbox{ s.t. } |j - \mu | \geq \sigma^2\},$$
$$\mbox{ and }
S_i = \{j \in \caly_\mu \mbox{ s.t. }
i \cdot \sigma \leq |j - \mu | < (i+1)\cdot \sigma\},$$
for $i \in \{2,\ldots,\sigma-1\}$.

Let $h_0,h_2,\ldots,h_{\sigma-1}$ and $\hinf$ denote the contribution
of $S_0,S_{2},\ldots,S_{\sigma-1}$ and $\sinf$ to the entropy of $Y$,
i.e., $h_i = \sum_{j \in S_i} \Pr[Y = j] \log \frac1{\Pr[Y=j]}$
(for $i \in \{0,2,\ldots,\sigma-1,\infty\}$).
Similarly, let $p_i = \Pr[Y \in S_i]$.
Note that we have $H(Y) = h_0 + \sum_{i = 2}^{\sigma-1} h_i + h_\infty$
and we bound these separately below, using rough approximations
on $p_i$.

We start with a basic fact. For any set $S$, let $\Pr[Y \in S]\leq p_S$ for some 
$p_S \in (0,1]$. Then
$$
h_S = \sum_{j \in S} \Pr[Y = j] \log \frac1{\Pr[Y=j]}
\leq p_S \log (|S|/p_S).
$$
(Follows easily from the convexity of the
entropy function and Jensen's inequality.)

This immediately yields our first bound, using $|S_0| \leq 4\sigma$
and $p_0 \leq 1$. We have
\begin{equation}
\label{eqn:h0}
h_0 \leq p_0 \log (4\sigma/p_0) \leq p_0 \log \sigma + 2
\end{equation}

For the remaining parts we use the following Chernoff-like
bound from \cite[Theorem~7.2.1]{MatousekVondrak}\footnote{Among the many
such bounds available, this one allows variables to be non-identically
distributed.}
$$\Pr[|Y - \mu| \geq \ell \sigma ] \leq e^{-3\ell^2/4} \leq 2^{-\ell},$$
for $2 \leq \ell \leq \sigma$.
Thus $p_i \le 2^{-i}$ for $i \in \{ 2,..., \sigma-1\}$, and $\pinf \leq 2^{-\sigma}$.

For $i \in \{ 2,..., \sigma-1\}$, we can thus bound $h_i$ by

\begin{equation}
\label{eqn:hi}
h_i \leq p_i \log (\sigma/p_i) \leq p_i \log \sigma + i 2^{-i},
\end{equation}

and, bound $\hinf$ by

\begin{equation}
\label{eqn:hinf}
\hinf \leq \sigma \cdot 2^{-\sigma} = O(\exp(-\sqrt{M})) = O(1)
\end{equation}

Combining Equations~\ref{eqn:h0},~\ref{eqn:hi},~and~\ref{eqn:hinf}
we get
\begin{eqnarray*}
H(Y)
& = & h_0 + \sum_{i=2}^{\sigma-1} h_i + \hinf\\
& \leq & p_0 \log \sigma + 2 + \sum_{i=2}^{\sigma-1} (p_i \log
\sigma + i 2^{-i}) + O(1) \\
& \leq & (p_0 + \sum_{i=2}^{\sigma-1} p_i) \log \sigma
+ \sum_{i=0}^{\infty} i 2^{-i} + O(1) \\
& \leq & \log \sigma + O(1)
\end{eqnarray*}

The claim now follows from the fact that $\sigma \leq \sqrt{M}$.
\end{proof}

\end{proof}

Our analysis of the $D^A|N^P$ channel immediately yields a lower
bound on the ``operational capacity'' of any
sequence of channels $\{\calc_{\mu_i}\}_{i=1}^N$.
Standard bounds in information theory (see, for instance,
\cite[Chapter 8, Theorem 8.7.1]{CoverThomas}) imply immediately
that a bound on the capacity also implies that any attempt to
communicate at rate greater than capacity lead to error with
positive probability. We summarize the resulting consequence below.
(We note that while the theorem in \cite{CoverThomas} only considers
a single channel and not a collection of channels, the proof goes
through with only notational changes to cover a sequence of channels.)

\begin{lemma}
\label{lem:danp-func-capacity}
Transmission at
rate $R$ greater than $C_0$, the information capacity,
leads to error with positive probability.
More precisely, for any $0 < \eps \le \frac 12$, let $C_0 = C_0(\eps,c)$ be an upper bound on
the information capacity of a collection of channels
$\{\calc_{\mu} | \mu\}$.
Then for every $R > C_0$, there exists a $\gamma_0 > 0$
and $N_0 < \infty$ such that for every $N\geq N_0$
the following holds:
For every sequence $\{\calc_{\mu_i}\}_{i=1}^N$
of $(M,\eps,c)$ channels, and every encoding and decoding pairs
$E:\{0,1\}^{RN} \to \prod_{i=1}^N \calx_{\mu_i}$
and $D:\{0,\ldots,M\}^N \to \{0,1\}^{RN}$, the
probability of decoding error
$\decerr \geq \gamma_0$.
\end{lemma}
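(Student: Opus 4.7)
My plan is to adapt the classical Fano-based converse to Shannon's channel coding theorem to the setting of a sequence of (possibly distinct) memoryless channels all sharing the same information capacity bound $C_0$. Since each $\calc_{\mu_i}$ in Definition~\ref{def:svn-channel} is itself a discrete memoryless channel (the Bernoulli noise is fresh at each use), and we have a uniform information-capacity bound by Lemma~\ref{lem:danp-inf-capacity}, the standard argument carries over essentially verbatim.

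In detail, I would set up the standard random variables. Let $W$ be drawn uniformly from $\{0,1\}^{RN}$, let $X^N = E(W) \in \prod_i \calx_{\mu_i}$ be the transmitted codeword, let $Y^N \in \{0,\ldots,M\}^N$ be the channel outputs (with $Y_i = \calc_{\mu_i}(X_i)$ drawn independently across $i$ conditioned on $X^N$), and let $\hat W = D(Y^N)$. The proof then chains four standard facts:
\begin{enumerate}
\item $H(W) = RN$ by uniformity.
\item Fano's inequality: $H(W\mid \hat W) \leq 1 + \decerr \cdot RN$, and hence $H(W \mid Y^N) \leq 1 + \decerr \cdot RN$ by the data processing inequality.
\item Another application of data processing gives $I(W;Y^N) \leq I(X^N;Y^N)$.
\item The memoryless structure of each $\calc_{\mu_i}$ gives $I(X^N;Y^N) \leq \sum_{i=1}^N I(X_i;Y_i) \leq N\cdot C_0$, where the first inequality uses that $Y_i$'s are conditionally independent given $X^N$ and depend only on $X_i$, and the second inequality uses Lemma~\ref{lem:danp-inf-capacity}.
\end{enumerate}
Combining these via the identity $H(W) = H(W\mid Y^N) + I(W;Y^N)$ yields
$$RN \;\leq\; 1 + \decerr\cdot RN + N\cdot C_0,$$
which rearranges to $\decerr \geq 1 - C_0/R - 1/(RN)$.

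Given $R > C_0$, I would then set $\gamma_0 = \tfrac12(1 - C_0/R) > 0$ and choose $N_0$ large enough that $1/(RN_0) \leq \tfrac12(1 - C_0/R)$; this yields $\decerr \geq \gamma_0$ for all $N \geq N_0$, uniformly over the sequence $\{\calc_{\mu_i}\}$ and over all encoder/decoder pairs. The only mildly nonstandard point, and thus where I would take the most care, is step~4: the argument in \cite{CoverThomas} is written for $N$ uses of a single fixed channel, but the proof uses only that given $X_i$, the output $Y_i$ is independent of the rest, and that each channel's individual mutual information is bounded by $C_0$; both properties hold here uniformly because the bound from Lemma~\ref{lem:danp-inf-capacity} is independent of $\mu$ (and $M$). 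No other step presents difficulty.
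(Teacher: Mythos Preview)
Your proposal is correct and is precisely the approach the paper has in mind: the paper does not give its own proof but simply invokes the standard Fano-based converse \cite[Theorem~8.7.1]{CoverThomas}, noting (as you do) that the extension from a single memoryless channel to a sequence of distinct channels sharing the uniform capacity bound $C_0$ requires only notational changes. Your explicit verification of step~4 is exactly the ``notational change'' the paper alludes to.
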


\subsubsection{\protect Proof of Lemma~\ref{DANP}}

\medskip
\noindent
\begin{proof} We now formally describe the delay adversary
and analyze the channel capacity.
Let $c = 4/\eps$ and let $C_0 = C_0(\eps,c)$ be the bound on
the capacity of $(M,\eps,c)$-channels $\calc_\mu$ from
Lemma~\ref{lem:danp-inf-capacity}. We prove the lemma
for $C(\eps,c) = 2(C_0 + \log c)$.

\smallskip
\noindent
{\bf Delay:}
Let $X_1,\ldots,X_{MT}$ denote the encoded signal the sender sends.
The noise channel picks $\noise(j)$ independently for each $j$
with $\noise(j)$ being $1$ w.p. $\eps$.
We now describe the action of the delay channel (which acts without
knowledge of $\noise$).

We divide time into $2T$ intervals, with the $i$th interval
denoted $\Gamma_i = \{(i-1)(M/2) + 1,\ldots, i(M/2)\}$.
Let $n_1(i) = \sum_{j \in \Gamma_i} X_j$ and
$n_0(i) = M/2 - n_1(i)$ denote the number of $1$-packets and
$0$-packets that arrive in the queue in the $i$th interval.
The delay adversary acts as follows:
\begin{enumerate}
\item Initialize $n'_1(1) = n_1(1)$ and $n'_0(1) = n_0(1)$.
\item For $i = 1$ to $2T$ do the following:
\begin{enumerate}
\item If $n_1(i) \geq n_0(i)$ then
set $\tilde{n}_0(i) = n'_0(i)$ and
round $n'_1(i)$ down to $\tilde{n}_1(i)$ so that $(1-\eps)\tilde{n}_1(i)
+ \eps \tilde{n}_0(i)$ is  within $\frac12$ of the nearest
integer multiple of $M/c$.
\item Else let
$\tilde{n}_1(i) = n'_1(i)$ and
round $n'_0(i)$ down to $\tilde{n}_0(i)$ so that $(1-\eps)\tilde{n}_1(i)
+ \eps \tilde{n}_0(i)$ is within $\frac12$ of the nearest integer multiple of $M/c$.
\item Finally set $n'_0(i+1) = n_0(i+1) + n_0(i) - \tilde{n}_0(i)$.
and $n'_1(i+1) = n_1(i+1) + n_1(i) - \tilde{n}_1(i)$.
\item
At the end of interval $i$, output $\tilde{n}_0(i)$ $0$-packets
and $\tilde{n}_1(i)$ $1$-packets from the queue to the noise
adversary. Formally, the delay channel outputs a set $\Lambda_i$
of packets that are to be released at the end of interval $\Gamma_i$,
where $\Lambda_i$ includes all packets that arrived in $\Gamma_{i-1}$
but were not included in $\Lambda_{i-1}$.
\item The noise adversary simply flips the bits according
to the noise function and outputs the sum of these bits.
Specifically it sets $Z_j = X_j + \noise(j)$
and outputs $Y_i = \sum_{j \in \Lambda_i} Z_j$.
\end{enumerate}
\end{enumerate}

\paragraph{Analysis:}
We start by establishing that the delay adversary 
never delays any packet by more than $M$ microintervals. Note that the number of 
packets that arrive in interval $i$, but are not released at the
end of the interval
is given by $(n'_1(i) - \tilde{n}_1(i)) + (n'_0(i) - \tilde{n}_0(i))$.
One of the two summands is zero by construction, and the other
{\bf is at most $M/(\eps c) \leq M/4$ by our construction}.
Since the total number of packets arriving in an interval
is $M/2$, this ensures that the total number released
in an interval is never more than $3M/4 \leq M$ (as required
for an $(M,\eps,\mu)$-channel). Next we note that packets
delayed beyond their release interval do get released
in the next interval. Again, suppose $n_0(i) > n_1(i)$.
Then all $1$-packets are released in interval $i$. And the
number of $0$-packets held back is at most $\frac{M}{\eps c} \leq
M/4$ which is less than $n_0(i)$ the total number of $0$-packets
arriving in interval $\Gamma_i$. Thus the adversary never delays
any packet more than $M$ microintervals, and the number of
packets released in all intervals
(except the final one) satisfy $\eps \tilde{n}_0(i) 
+ (1-\eps)\tilde{n}_1(i)$ in an integer multiple of $M/c$.

For an encoded message $X_1,\ldots,X_{MT}$, let
$\mu_i = [\epsilon \tilde{n}_0(i) + (1-\eps)\tilde{n}_1(i)]$,
where the notation $[x]$ indicates the nearest integer to $x$,
denote the {\em signature} of the $i$th interval;
and let $\vec{\mu} = (\mu_1,\ldots,\mu_{2T})$ denote its
signature.
Note that $\mu_i$ takes one of at most $c$ distinct values
(since it is between $M/c$ and $M$ and always an integer multiple of
$M/c$).
Thus the number of signatures is at most $c^{2T}$.

Now since the total number of distinct messages is $2^{k_T}$,
the average number of messages with a given signature sequence
is at least $2^{k_T}/c^{2T}$.
Furthermore, with probability at least $1-\delta$, a random
message is mapped to a signature sequence with at least
$\delta 2^{k_T}/c^{2T}$ preimages. Suppose that
such an event happens. Then, using the fact that 
$R > 2 (C_0 + \log c) - \frac{1}T \log \delta$, we argue below that 
conditioned on this event the probability of correct decoding
is at most $1 - \gamma_0$ (where $\gamma_0 > 0$ is the constant from
Lemma~\ref{lem:danp-func-capacity}). This yields the lemma
for $\gamma = (1 - \delta)\gamma_0$.

To see this, note that the signal $Y_1,\ldots,Y_{2T}$ 
received by the receiver is exactly the output
of the channel sequence $\{\calc_{\mu_i}\}_{i=1}^{2T}$
on input $\tilde{X}_1,\ldots,\tilde{X}_{2T}$ where $\tilde{X}_i = 
(1-\eps)\tilde{n}_1(i) + \eps \tilde{n}_0(i)$. 
If the receiver decodes the message (more precisely, its encoding)
$X_1,\ldots,X_{MT}$ correctly from $Y_1,\ldots,Y_{2T}$, then 
we can also compute
the sequence $\tilde{X}_1,\ldots,\tilde{X}_{2T}$ correctly
(since the delay adversary is just a deterministic function
of its input $X_1,\ldots,X_{MT}$). Thus correct decoding of 
the $D^A|N^P(\eps)$ channel
also leads to a correct decoding of the 
channel sequence $\{\calc_{\mu_i}\}$. 
But the number of distinct messages being transmitted
to this channel is $\delta 2^{k_T}/c^{2T}$. Denoting
this by $2^{\tilde{R}\cdot 2 \cdot T}$ and using the
fact that $\tilde{R} > C_0$, we get that the channel must
err with probability at least $\gamma$.
\end{proof}

\section{Infinite Capacity Regime}
\label{sec:infinite}

In this section we show that the capacity of the channel with
adversarial noise followed by random delay $(N^A|D^P)$ is infinite. Specifically, we establish the following result:

\begin{lemma}
\label{NADP}
There exists a positive $\eps$, such that the capacity
of the channel $N(\eps)^A|D^P$ is unbounded.
Specifically, for every rate $R$,
there exists a constant $M$ (and $\mu = 1/M$),
such that for sufficiently large $T$,
there exist encoding and decoding functions
$E_T: \{0,1\}^{k_T} \to \{0,1\}^{MT}$
and
$D_T: (\Z^{\geq 0})^{MT} \to \{0,1\}^{k_T}$,
the decoding error probability
$\decerr \leq \exp(-T)$, with $k_T = R\cdot T$.
\end{lemma}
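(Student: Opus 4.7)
The proof plan is to construct an explicit encoding scheme achieving any rate $R$ by taking $M$ sufficiently large, and to show decoding succeeds with probability $1-\exp(-T)$. The key structural observation is that since the adversary commits its noise $\noise(X)$ before the random delays are applied, its $\leq \eps MT$ bit-flips are distributed to random positions in the received signal by the subsequent delays, effectively behaving as random rather than targeted noise. This makes the channel behave like a Poisson-type channel whose capacity grows with the input sample rate $M$, in sharp contrast to the $D^A|N^P$ setting analyzed in Lemma~\ref{lem:danp-inf-capacity}, where adversarial control of the delay forces the output to depend on $X$ only through a bounded number of signature values per interval.

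Concretely, for target $R$ and fixed small $\eps$, I would choose $M \asymp R/(\eps \log(1/\eps))$ and use a constant-weight random code $\calC \subset \{0,1\}^{MT}$ of weight $w = 2\eps MT$ and pairwise Hamming distance $d \geq 4\eps MT$, which by constant-weight Gilbert--Varshamov has $|\calC| \geq 2^{RT}$. Decoding would use a Poisson-likelihood criterion: $\arg\max_X \log\Pr[Y \mid X]$, where $\Pr[Y \mid X]$ is computed assuming $Y_i$ is Poisson with intensity $\sum_j X_j\, p(i-j)$ and $p$ is the geometric delay pmf. For the correct codeword $X$, the post-noise vector $Z = X \oplus \noise(X)$ has Hamming weight within $\eps MT$ of $w$, so $Y$ is typical under the $X$-model. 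For any other codeword $X'$, the post-noise $Z' = X' \oplus \noise(X')$ satisfies $|Z \oplus Z'|_1 \geq d - 2\eps MT = 2\eps MT$ by a triangle inequality, which (via a Poisson KL estimate) makes $Y$ unlikely under $X'$ by a factor $\exp(-\Omega(T))$ in the likelihood ratio. A union bound over $|\calC| \leq 2^{RT}$ codewords then gives total error probability $\exp(-\Omega(T))$ provided $R$ is a sufficiently small constant multiple of $\eps M$; in particular rate $R$ is achievable by choosing $M$ large enough, so capacity is unbounded.

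The main technical obstacle is quantifying the likelihood-ratio gap in the presence of both random delay and adaptive adversarial noise. For the delay variability, standard Chernoff/Poisson concentration bounds control the deviation of $Y$ from its conditional expectation $\hat Y(X) = X * p$. For the adversary, the delicate point is that $\noise(X)$ depends on the codeword $X$, so one must argue that even the worst-case choice of $\noise(\cdot)$ cannot cause systematic alignment between the adversary's errors and the random delays (which are independent of $\noise$). A coupling or typicality argument, comparing the distribution of $Y$ induced by the adversarial channel to that induced by an iid noise channel with the same weight budget, is the cleanest path: since the adversary is delay-blind, any structured $\noise(X)$ gets diffused into an essentially unstructured perturbation of $Y$, and its effect is no worse than random noise of the same weight on the per-position Poisson likelihood.
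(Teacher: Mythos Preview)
Your approach is genuinely different from the paper's, but the central technical claim has a real gap. The paper does \emph{not} rely on any ``diffusion'' of adversarial noise by random delays. Instead it uses an explicit block encoding: each bit of an outer error-correcting codeword $E'(m)$ is expanded to $0^L 1^L$ or $1^L 0^L$ with $L = M^{4/5}$, and the decoder compares the received counts in two short windows at the end of consecutive intervals to detect whether the density of ones went up or down. The adversary can only ``corrupt'' a block by spending $\Omega(\eps L)$ flips there, so by averaging at most a constant fraction of blocks are corrupted; the outer code then absorbs those, together with the rare blocks where the queue is overloaded or the delay process deviates. The analysis is entirely local and never asserts that adversarial flips get randomized away.

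Your diffusion intuition fails for a concrete reason: the random delays smear each bit over a window of order $M$ microintervals (one macroscopic time unit), but the adversary's budget is $\eps MT$ flips, which it can concentrate in a contiguous region of length $\eps MT \gg M$. Delays do essentially nothing to such a region, so the received signal there is wholly under adversarial control. In particular, the step ``$Z = X \oplus \noise(X)$ has Hamming weight within $\eps MT$ of $w$, so $Y$ is typical under the $X$-model'' is a non sequitur: the adversary can, within its budget, swap $\eps MT/2$ ones in one macroscopic region for $\eps MT/2$ ones in another, preserving the weight exactly while making the convolved intensity profile $\lambda(X) = X * p$ and $\lambda(Z) = Z * p$ differ by $\Theta(1)$ over $\Theta(\eps MT)$ coordinates. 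Your likelihood-ratio argument would then need to show that no other codeword $X'$ has $\lambda(X')$ close to $\lambda(Z)$, which is a statement about the code's behavior under convolution with $p$, not a consequence of Hamming distance alone. The sketch gives no mechanism for this, and the coupling-to-iid-noise heuristic is false at the scale that matters.

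There is also a modelling issue: the $Y_i$ are not independent Poissons but a randomly permuted (with repetition) allocation of the fixed multiset $\{Z_j\}$, so the product-Poisson likelihood is only an approximation, and one whose error you would need to control uniformly over adversarial $\noise$. The paper sidesteps all of this by never attempting global ML decoding: its per-block statistic is a difference of two sums of at most $O(M)$ independent indicators, analyzed directly via Chernoff bounds, and the adversary is neutralized by the outer code rather than by any smoothing property of the delay.
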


\paragraph{Proof Idea:}
The main idea here is that the encoder encodes a $0$ by a series
of $0$s followed by a series of $1$s and a $1$ by a series of
$1$s followed by a series of $0$s. Call such a pair of series a
``block''. If the noisy adversary doesn't
corrupt too many symbols within such a block (and it can't afford
to do so for most blocks), then the receiver can distinguish
the two settings by seeing if the fraction of 1s being received
went up in the middle of the block and then went down, or the
other way around. This works with good enough probability (provided
the delay queue has not accumulated too many packets) to allow
a standard error-correcting code to now be used by sender and
receiver to enhance the reliability.

\medskip
\noindent
\begin{proof}
We will prove the lemma for $\eps < 1/64$ below.
\footnote{For clarity of exposition, we do not make any attempt to optimize the bound on the value of $\eps$.}
Let $k = k_T = RT$. We will set $M = O(R^5)$.
Let $L = M^{4/5}$, and $L' = M^{3/4}$,
$\Gamma_i = \{(i-1)L+1,\ldots,iL\}$,
and $\Gamma'_i = \{iL - L'+1,\ldots,iL\}$.
As a building block for
our sender-receiver protocol, we will use a pair
of classical encoding and decoding algorithms, $E'$ and $D'$, that
can handle up to $5/24$-fraction of adversarial
errors. (Note that $5/24$ could be replaced with any constant
less than $1/4$.)
In particular, for each message $m \in \{ 0,1 \}^k$, the algorithm $E'$ outputs
an encoding $E'(m)$ of length $N = \Theta(k)$
such that for any binary string
$s$ of length $N$ that differs from $E'(m)$ in at most $(\frac{5}{24})N$ locations,
$D'(s) = m$. We now describe our encoding and decoding protocols.

\medskip
\noindent
{\bf Sender Protocol:} The encoding $E = E_T$ works as follows.
Let $m \in \{ 0,1 \}^k$ be the message that the sender wishes
to transmit. The encoding $E(m)$ simply replaces every $0$
in $E'(m)$ with the string $0^L 1^L$,
and each $1$ in $E'(m)$ with the string $1^L 0^L $.
Thus $E(m)$ is a string of length $2 L N = MT$.
The sender transmits the string $E(m)$ over the channel.

\medskip
\noindent
{\bf Receiver Protocol:} Recall that the receiver receives, at
every microinterval of time $t \in [MT]$ the quantity
$Y_t = \sum_{j \leq t | j + \delay(j) = t} X_j \oplus \noise(j)$.
For an interval $I \subseteq [MT]$, let $Y(I) = \sum_{j \in I} Y_j$.
The decoding algorithm $D = D_T$, on input $Y_1,\ldots,Y_{MT}$
works as follows:
\begin{enumerate}
\item For $i = 1$ to $N$ do:
\begin{enumerate}
\item Let $\alpha_i = Y(\Gamma'_{2i-1})/L'$.
\item If $Y(\Gamma'_{2i}) - Y(\Gamma'_{2i-1}) \leq (-\alpha_i +
\frac12)\cdot M^{11/20}$ then set $w_i = 1$, else set $w_i = 0$.
\end{enumerate}
\item Output $D'(w)$.
\end{enumerate}

\paragraph{Analysis:}
By the error-correction properties of the pair $E',D'$, it
suffices to show that for $(19/24)$-fraction of
the indices $i \in [N]$, we have $w_i = E'(m)_i$.

Fix an $i \in [N]$ and let $Q_i$ denote the number of $1$'s in 
the queue at the beginning of interval $\Gamma'_{2i-1}$.
We enumerate a series of ``bad events'' for interval $i$
and show that if
none of them happen, then $w_i = E'(m)_i$. Later we show
that with probability ($1 - \exp(-T)$) the number of
bad $i$'s is less than $(5/24)N$, yielding the lemma.

We start with the bad events:
\begin{description}
\item[$\calE_1(i)$:] $Q_i > cM$ (for appropriately chosen constant $c$).
We refer $i$ as {\em heavy} (or more specifically $c$-heavy) if 
this happens.

\item[$\calE_2(i)$:] The number of errors introduced by the 
adversary in the interval $\Gamma_{2i}$ is more than 
$16\eps L$. We refer to $i$ as {\em corrupted} if this happens.

\item[$\calE_3(i)$:] $i$ is not $c$-heavy but one of $Y(\Gamma'_{2i-1})$ or
$Y(\Gamma'_{2i})$ deviates
from its expectation by more than $\omega(M^{1/2})$.
We refer to $i$ as {\em deviant} if this happens. 
\end{description}

In the absence of events $\calE_1$, $\calE_2$, $\calE_3$, we
first show that $w_i = E'(m)_i$.
Denote $i$ to be a $1$-block if $E'(m)_i = 1$ and a $0$-block otherwise.
To see this, we first compute the expected values of $Y(\Gamma'_{2i-1})$,
and $Y(\Gamma'_{2i})$ conditioned on $i$ being a $0$ block and $i$ being
a $1$ block. (We will show that these expectations differ by roughly
$M^{11/20}$, and this will overwhelm the deviations allowed for
non-deviant $i$'s.)

We start with the following simple claim.

\begin{claim}
\label{claim:release_prob}
Let $\ell_1, \ell_2$ be a pair of non-negative integers, and let ${\cal E}$
denote the event that a packet $p$ that is in the delay queue at some 
time $t$
leaves the queue during the interval $\{t + \ell_1+1,\ldots, t + \ell_1 + \ell_2\}$. Then

$$   \left( 1 - \frac{\ell_1}{M} \right) \left(\frac{\ell_2}{M} - \frac{\ell_2^2}{M^2} \right)   \le \Pr[{\cal E}]  \le \left( 1 - \frac{\ell_1}{M} + \frac{\ell_1^2}{M^2} \right) \left( \frac{\ell_2}{M} \right).$$
Thus if $\ell_1 = 0$ and $\ell_2 \ll M$, then
$ \frac{\ell_2}{M} - O\left(\left(\frac{\ell_2}{M} \right)^2\right) 
\le \Pr[{\cal E}]  \le \frac{\ell_2}{M}.$
\end{claim}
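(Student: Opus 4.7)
The key observation is that, because the exponential delay has been discretized into a memoryless geometric departure process, each packet currently in the queue independently departs in each microinterval with probability $1/M$. This was already noted at the beginning of the $D^P|N^A$ section: ``each packet in the queue chooses to depart, independent of other packets, with probability $\mu$''. Accordingly, the event $\calE$ factors as the conjunction of two independent events: (i) the packet $p$ fails to depart during each of the $\ell_1$ microintervals $t+1,\ldots,t+\ell_1$, and (ii) the packet does depart during at least one of the subsequent $\ell_2$ microintervals $t+\ell_1+1,\ldots,t+\ell_1+\ell_2$. Thus
\[
\Pr[\calE] = \left(1-\tfrac{1}{M}\right)^{\ell_1}\left(1 - \left(1-\tfrac{1}{M}\right)^{\ell_2}\right).
\]

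The remaining task is to sandwich each factor between elementary polynomial approximations. For the survival factor I would use Bernoulli's inequality on one side, $(1-1/M)^{\ell_1} \geq 1-\ell_1/M$, and a second-order truncation of the binomial expansion on the other side, $(1-1/M)^{\ell_1} \leq 1 - \ell_1/M + \binom{\ell_1}{2}/M^2 \leq 1 - \ell_1/M + \ell_1^2/M^2$. For the departure factor the same inequalities give $1 - (1-1/M)^{\ell_2} \leq \ell_2/M$ (union bound, or equivalently Bernoulli) and $1 - (1-1/M)^{\ell_2} \geq \ell_2/M - \ell_2^2/M^2$ (using $(1-x)^{\ell_2}\leq 1-\ell_2 x + \binom{\ell_2}{2}x^2$ at $x=1/M$, dropping the positive term).

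Multiplying the relevant one-sided bounds, the upper bound is immediate:
\[
\Pr[\calE] \leq \left(1-\tfrac{\ell_1}{M}+\tfrac{\ell_1^2}{M^2}\right)\left(\tfrac{\ell_2}{M}\right),
\]
and the lower bound follows since $(1-\ell_1/M)$ and $(\ell_2/M - \ell_2^2/M^2)$ are both nonnegative (in the regime where the claim is meaningful; otherwise the stated lower bound is trivially $\leq 0$ and nothing needs to be proved). The concluding ``Thus if $\ell_1=0$ and $\ell_2 \ll M$'' remark is then the specialization $\ell_1 = 0$, where the survival factor disappears, leaving $\ell_2/M - O((\ell_2/M)^2) \leq \Pr[\calE] \leq \ell_2/M$.

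There is really no obstacle here — the only substantive step is recognizing the independence afforded by the memoryless geometric model, after which the claim reduces to two applications of the standard truncated-binomial inequalities. The only minor care needed is to make sure the sign of each discarded binomial term is used in the correct direction when combining the two one-sided bounds.
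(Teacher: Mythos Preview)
Your proposal is correct and matches the paper's proof essentially line for line: the paper writes $\Pr[\calE] = (1-1/M)^{\ell_1}\bigl(1-(1-1/M)^{\ell_2}\bigr)$ and then invokes the two-sided estimate $1-\ell/M \le (1-1/M)^{\ell} \le 1-\ell/M+\ell^2/M^2$, exactly as you do. Your write-up supplies more justification for the elementary inequalities than the paper does, but there is no difference in approach.
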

\begin{proof}
Note that
$$ \Pr[{\cal E}] = \left(1 - \frac{1}{M} \right)^{\ell_1} \left( 1 - \left(1 - \frac{1}{M} \right)^{\ell_2} \right).$$

Using the fact that for any non-negative integer $\ell$,

$$  1 - \frac{\ell}{M}  \le \left(1 - \frac{1}{M} \right)^{\ell} \le 
1 - \frac{\ell}{M} + \frac{\ell^2}{M^2},$$

we get the bounds in the claim.
\end{proof}

Let $Q_i = \tilde{\alpha}\cdot M$. We now analyze
the expectations of the relevant $Y(\cdot)$'s. We analyze
them under the conditions that $\tilde{\alpha}$ is bounded by
the constant $c$ (i.e. $i$ is not heavy) and that $i$ is not corrupt.
\begin{description}
\item[{\protect $\E[Y(\Gamma'_{2i-1})]$}:]
The probability that a single packet leaves the queue in
this interval is roughly $L'/M + O((L'/M)^2)$ (by Claim~\ref{claim:release_prob}
above). The expected number of packets that were in the
queue at the beginning of $\Gamma'_{2i-1}$ that leave the queue
in this interval is thus $(\tilde{\alpha}\cdot M \cdot L'/M) \pm
O((L')^2/M)$. Any potential new packets that arrive during
this phase contribute another $O((L')^2/M)$ potential packets,
thus yielding 
$\E[Y(\Gamma'_{2i-1})] = \tilde{\alpha} L' \pm O(\sqrt{M}) = \tilde{\alpha}M^{3/4} \pm O(\sqrt{M})$.
\item[{\protect $\E[Y(\Gamma'_{2i})]$} when $i$ is a $1$-block:]
Recall that a $1$-block involves transmission of $1$s in
$\Gamma_{2i-1}$ and $0$s in $\Gamma_{2i}$. With the 
adversary corrupting up to $16\eps L$ packets in $\Gamma_{2i}$
and the addition of $L'$ new $1$s in the interval $\Gamma'_{2i-1}$,
at most $L' + 16\eps L$ new ones may be added to the queue at the
beginning of the interval $\Gamma'_{2i}$. Using
Claim~\ref{claim:release_prob}
with $\ell_1 = L$ and $\ell_2 = L'$
to the $Q_i$ packets from the beginning of interval
$\Gamma'_{2i-1}$, and with $\ell_1=0$ and $\ell_2 = L'$
to the new packets that may have been added, we
get that

\begin{eqnarray*}
\E[Y(\Gamma'_{2i})] & \leq &
\left(1 - \frac{L}{M} + \frac{L^2}{M^2} \right)\cdot \left(\frac{L'}{M}\right)\cdot\tilde{\alpha}\cdot M + 
\left( \frac{L'}{M} \right)\cdot (16\eps L + L')\\
& = & \tilde{\alpha} M^{3/4} - (\tilde{\alpha} - 16\eps) M^{11/20}
+ O(\sqrt{M}).
\end{eqnarray*}
\item[{\protect $\E[Y(\Gamma'_{2i})]$} when $i$ is a $0$-block:]
In this case the $\Gamma_{2i-1}$ is all $0$s and
$\Gamma_{2i}$ is all $1$s. So the number of 
$1$s seen in the $\Gamma'_{2i}$ should be more than 
the number of $1$s seen in the
$0$-block case. In this case, the number of new $1$s added
to the queue in the intervals $\Gamma'_{2i-1}$ and $\Gamma_{2i} -
\Gamma'_{2i}$ is lower bounded
by $L - L' - 16\eps L$. Using Claim~\ref{claim:release_prob}
again to account for the departures from $Q_i$
as well as the new arrivals in the interval $\Gamma'_{2i}$, we get

\begin{eqnarray*}
\E[Y(\Gamma'_{2i})] & \geq & 
(1 - \frac{L}{M})\cdot \left(\frac{L'}{M} - O\left( \frac{L'^2}{M^2} \right) \right)\cdot 
\tilde{\alpha} \cdot M \\
& & + \left((1-16\eps)L-L' \right) \cdot \left( 1 - \frac{L}{M} \right) \cdot \left( \frac{L'}{M} - O\left(\frac{L'^2}{M^2} \right) \right) \\
&= & \tilde{\alpha} M^{3/4} - (\tilde{\alpha} - (1-16\eps))\cdot M^{11/20} +
O(\sqrt{M}).
\end{eqnarray*}
\end{description}

Putting the above together we see that 
$\E[Y(\Gamma'_{2i-1}) - Y(\Gamma'_{2i})]$ has a leading
term of $\tilde{\alpha} M^{3/4}$ in both cases ($i$ being
a $0$-block or $i$ being a $1$-block), but the second
order terms are different, and these are noticeably different.
Now, if we take into account the fact that the event $\calE_3(i)$
does not occur ($i$ is not deviant), then we conclude that the 
deviations do not alter even the second order terms.
We thereby conclude that if none of the events
$\calE_1(i)$ or $\calE_2(i)$ or $\calE_3(i)$ occur, then $w_i = E'(m)_i$.

We now reason about the probabilities of the three events.
The simplest to count is $\calE_2(i)$. By a simple averaging argument,
at most $(1/8)$th of all indices $i$ can have $i$ corrupt, since
the total number of noise errors is bounded by $\eps(2LN)$, and so the probability
of $\calE_2(i)$ is zero on at least $(7/8)$th fraction of indices.
$\calE_3(i)$ can be analyzed using standard tail inequalities.
Conditioned on $i$ being not $c$-heavy, 
each $Y(\cdot)$ is a sum of at most $(cM + L + L')$ independent random variables (each
indicating whether a given packet departs queue in the specified interval).
The probability that this sum deviates from its expectation by
$\omega(\sqrt{M})$ is $o(1)$. 
Thus, the probability that $\calE_3(i)$ happens for more than a $(1/24)$th fraction 
of indices $i$, can again be bounded by $\exp(-T)$ by Chernoff bounds.

The only remaining event is $\calE_1(i)$. Lemma~\ref{lem:heavy} below
shows that we can pick $c$ large enough to make
sure the number of heavy $i$'s is at most a $(1/24)$th fraction of all $i$s,
with probability at least $1 - \exp(-T)$.
We conclude that with probability at least $1 - \exp(-T)$
the decoder decodes the message $m$ correctly.
\end{proof}

\begin{lemma}
\label{lem:heavy}
For every $\delta > 0$, there exists a $c = c(\delta)$ such that
the probability that more than $\delta$-fraction of the 
indices $i$ are $c$-heavy is at most $e^{-(MT)/4}$.
\end{lemma}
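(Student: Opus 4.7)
The plan is to bound the total amount of time the queue spends above level $cM$ and use this aggregate bound to control the number of heavy indices. By the memoryless departure rule, each packet $s$ arrives at microinterval $s$ and remains in the queue for an independent random time $\tau_s$ that is geometrically distributed with parameter $1/M$, hence has mean $M$. Exchanging the order of summation gives
\[
\sum_{t=1}^{MT} Q_t \;\leq\; \sum_{s=1}^{MT} \tau_s,
\]
where $Q_t$ is the queue size at time $t$: the inner indicator of whether packet $s$ is in the queue at time $t$ sums over $t$ to at most $\tau_s$. The right-hand side is thus a sum of $MT$ independent geometric random variables, each with mean $M$, so its total mean is $M^2T$.

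Next, I would apply a Cram\'er--Chernoff bound to this sum. The moment generating function $\E[e^{\lambda \tau_s}]$ is finite for $\lambda < -\log(1 - 1/M)$; evaluating it at $\lambda = c_0/M$ for a small constant $c_0 \in (0,1)$ and multiplying across the $MT$ independent copies gives
\[
\Pr\!\left[\,\sum_s \tau_s > 2M^2 T\,\right] \;\leq\; \exp\!\bigl(-MT\,(2c_0 + \log(1-c_0)) + O(T)\bigr) \;\leq\; e^{-MT/4},
\]
where the last inequality holds by choosing $c_0 = 1/2$ (which gives rate $1 - \log 2 \approx 0.307$) and using that $M$ is at least a sufficiently large constant (recall $M = O(R^5)$ with $R$ fixed). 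Conditioned on the complementary good event $\sum_t Q_t \leq 2 M^2 T$, at most $2 M^2 T / (cM) = 2MT/c$ microintervals $t$ can satisfy $Q_t > cM$, since each such $t$ contributes at least $cM$ to the total.

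Each heavy index $i$ corresponds to a distinct microinterval $t_i$ (the start of $\Gamma'_{2i-1}$) at which the queue exceeds $cM$, so the number of heavy indices is at most $2MT/c$. Since $N = \Theta(RT)$, choosing $c$ proportional to $M/(\delta R)$ --- a constant once $R$, $M$, and $\delta$ are fixed --- ensures $2MT/c \leq \delta N$, completing the argument with failure probability at most $e^{-MT/4}$.

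The main obstacle is the concentration step. Because geometric random variables are unbounded, Hoeffding's inequality does not apply directly; one has to go through the MGF computation above (or invoke a Bernstein-type bound for sub-exponential sums) and track constants carefully to land at the stated $e^{-MT/4}$. Everything else is a clean averaging argument that deliberately trades the stronger union-bound-in-$i$ estimate (which would force $c$ to grow with $T$) for the aggregate bound on $\sum_t Q_t$, which by linearity of expectation and independence of the $\tau_s$'s holds with exponential slack.
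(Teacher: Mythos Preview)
Your overall strategy---bound $\sum_t Q_t$ by the total delay $\sum_s \tau_s$, show this is at most $2M^2T$ except with probability $e^{-MT/4}$, then average---is exactly the paper's. Your concentration step is also fine; the paper reaches the same bound by the equivalent device of rewriting $\Pr[\sum_s \tau_s > K]$ as $\Pr[\mathrm{Binomial}(K,1/M) < MT]$ and applying a standard Chernoff bound, which sidesteps the MGF bookkeeping.

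The gap is in your final averaging step. You count microintervals $t$ with $Q_t > cM$, getting at most $2MT/c$ of them, and note that each heavy index $i$ contributes one such $t$. But there are only $N = MT/(2L) = M^{1/5}T/2$ indices, so forcing $2MT/c \leq \delta N$ requires $c \geq 4M^{4/5}/\delta$. Your $c$ therefore depends on $M$, not only on $\delta$, contrary to the lemma's statement $c = c(\delta)$. This is not merely cosmetic: in the surrounding proof of Lemma~\ref{NADP}, the non-deviance event $\calE_3(i)$ is controlled by viewing $Y(\Gamma')$ as a sum of at most $cM + L + L'$ Bernoulli indicators, and with $c = \Theta(M^{4/5})$ the resulting standard deviation is of order $M^{31/40}$, which swamps the $M^{11/20}$ gap that separates $0$-blocks from $1$-blocks. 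So your weaker $c$ would break the decoding analysis.

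The paper closes this with one additional idea you are missing: the queue size $Q_t$ is $1$-Lipschitz in $t$, since at most one packet arrives per microinterval. Partition time into ``chunks'' of length $M/\delta$ and call a chunk bad if its initial queue exceeds $3M/\delta$. By Lipschitzness, a bad chunk has $Q_t \geq 2M/\delta$ throughout and hence contributes at least $(M/\delta)(2M/\delta)$ to $\sum_t Q_t$; combined with $\sum_t Q_t \leq 2M^2T$ this forces at most a $\delta$-fraction of chunks to be bad. Conversely, in a good chunk the queue never exceeds $3M/\delta + M/\delta = 4M/\delta$, so no index inside it is $c$-heavy for $c = 4/\delta$. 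This yields $c$ depending only on $\delta$.
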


\begin{proof}
Recall that an interval $i$ is {\em $c$-heavy} if $Q_i > cM$.
We will show that the lemma holds for $c = 4/\delta$.

For each packet $j$, recall that $\delay(j)$ indicates the number of microintervals 
for which the packet $j$ stays in the queue. 
Let $W = \sum_{j} \delay(j)$. 
We will bound the probability that $W$ is ``too large'' and then use this to 
conclude that the probability that too many intervals are heavy is small.

Note $W$ is the sum of $MT$ 
identical and independent geometric variables (namely the $\delay(j)$'s) with expectation of each being $M$. 
Thus the probability that $W > K$ (for any $K$) is exactly the probability that $K$ independent Bernoulli random
variables with mean $1/M$ sum to less than $MT$. We can bound the probability of this using
standard Chernoff bounds. Setting $K = 2 \cdot M^2 \cdot T$, we thus get:

$$ \Pr[W > 2 M^2 T] = \Pr[W > 2 \E[W]] \leq {\rm exp}\left( -\frac{MT}{4} \right).$$
It then suffices to show that conditioned on 
$W \le 2M^2T$, the fraction of $c$-heavy intervals (i.e., intervals where the queue contains more than $(4M)/\delta$ packets)
is bounded by $\delta$. 

In order to bound the number of $c$-heavy intervals using the bound on $W$, we first
note that $W = \sum_{t=1}^{MT} N_t$, where $N_t$ denotes the number of
packets in the queue at time $t$ (counted in microintervals). 
Furthermore, since the number of packets
in the queue can go up by at most one per microinterval, we see that heavy 
intervals contribute a lot to $W$. To make this argument precise, we
partition time into chunks containing $M/\delta$ microintervals each 
(note that ``chunks'' are much larger than the ``blocks'').
We assume here that $M/\delta$ is an integer for notational simplicity. 
For $1 \le \ell < \delta T$,
the chunk $C_\ell$ spans the range $[\ell(M/\delta), (\ell+1)(M/\delta))$. We say a chunk $C_\ell$ is
{\em bad} if the queue contains more than $(3M)/\delta$ packets at the beginning of the chunk,
and say that it is {\em good} otherwise. 
On the one hand, if a chunk is good, then every interval contained inside the chunk has at most 
$(4M)/\delta$ packets in the queue, and is hence not $c$-heavy. 
On the other hand, if a chunk $C_\ell$ is bad, then its contribution 
to $W$ (i.e., $\sum_{t \in C_\ell} N_t$) is at least $(M/\delta)(2M/\delta)$ (since
this is the minimum of $N_t$ for $t \in C_\ell$).
This allows us to show that at most a $\delta$-fraction of chunks can be bad. 
To see this, suppose $\delta_b$ is the fraction of bad chunks.
Then we have 
$$ W = \sum_{\ell =1}^{MT} N_t \geq \delta_b (\delta T) (M/\delta) (2M/\delta) = 2 (\delta_b/\delta) M^2 T.$$
Now using $W \leq 2M^2 T$, we get $\delta_b \leq \delta$.
Finally note that if $1 - \delta$ fraction of the chunks are good, then $1-\delta$
fraction of the blocks are not $c$-heavy, which
completes the proof of the lemma.
\end{proof}

\section{Conclusions}
\label{sec:conclude}
Our findings, in particular the result that 
the channel capacity is unbounded
in the setting of probabilistic
error and delay,
are surprising.
They seem to run contrary to most traditional intuition about
communication: all attempts at
reliable communication, either in the formal theory of Shannon, or in the
organic processes that led to the development of natural languages, are built
on a discrete communication model (with finite alphabet and discrete time),
even when implemented on physical (continuous time and alphabet)
communication channels. In turn such assumptions also form the basis
for our model of computing (the Turing model) and the discrete setting
is crucial to its universality. In view of the central role played by
the choice of finite alphabet in language and computation, it does make sense
to ask how much of this is imposed by nature (and the
unreliability/uncertainty it introduces) and how much due to the 
convenience/utility of the model.

Of course, our results only talk about the capacity of a certain
mathematical model of communication, and don't necessarily translate
into the physical world. The standard assumption has been that a
fixed communication channel, say a fixed copper wire, has an associated
finite limit on its ability to transmit bits (reliably). We discuss
below some of the potential reasons why this assumption may hold
and how that contrasts with our results:
\begin{description}
\item[Finite Universe] One standard working assumption in physics
is that everything in the universe is finite and discrete and the
continuous modeling is just a mathematical abstraction. While this may
well be true, this points to much (enormously) larger communication
capacities for the simple copper wire under consideration than the limits we
have gotten to. Indeed in this case, infinity would be a pretty good
abstraction also to the number of particles in the universe, and
thus of the channel capacity.
We note here that channel capacity has been studied from a purely physics
perspective and known results give bounds on the communication rate achievable
in terms of physical limits imposed by 
channel cross section, available power, Planck constant, and speed of light (see, for example,~\cite{Bekenstein'04,Lloyd+04}).

\item[Expensive Measurements] A second source of finiteness might be that
precise measurements are expensive, and so increasing the capacity does come
at increased cost. Again, this may well be so, but even if true suggests
that we could stay with existing trans-oceanic cables and keep
enhancing their
capacity by just putting better signaling/receiving instruments at
the two endpoints - a somewhat different assumption than standard ones
that would suggest the wires have to be replaced to increase capacity.
\item[Band-limited Communication] A third possibility could be that
signalling is inherently restricted to transmitting from the linear
span of a discrete
and bounded number of basis functions. As a physical assumption on nature,
this seems somewhat more complex than the assumption of probabilistic
noisiness, and, we believe, deserves further explanation/exploration.
\item[Adversaries Everywhere] Finally, there is always the possibility that
the probabilistic modelling is too weak to model even nature and we should
really consider the finite limits obtained in the adversarial setting
as the correct limits. Despite our worst-case upbringing, this does seem a
somewhat paranoid view of nature. Is there really an adversary sitting in every piece of copper wire?
\end{description}

\section*{Acknowledgments}

Thanks to Henry Cohn, Adam Kalai, Yael Kalai for helpful pointers
and discussions.


\begin{thebibliography}{99}

\bibitem{Bekenstein'04}
J. D. Bekenstein.
\newblock Black holes and information theory.
\newblock Contemporary Physics, Vol. 45, Issue 1, pp. 31-43, Jan 2004.

\bibitem{CoverThomas}
T. M. Cover and J. A. Thomas.
\newblock Elements of Information Theory.
\newblock Wiley Publishing, New York, 1991.

\bibitem{FellerV2'67}
W. Feller.
\newblock An Introduction to Probability Theory and its Applications, Vol.
2.
\newblock Wiley, New York, 1967.

\bibitem{Hartley'28}
R. V. L. Hartley.
\newblock Transmission of Information.
\newblock Bell System Technical Journal, vol. 3, pp. 535--564, July 1928.

\bibitem{Lloyd+04}
S. Lloyd, V. Giovannetti, and L. Maccone.
\newblock Physical Limits to Communication.
\newblock Phys. Rev. Lett., 93(10), 100501, 2004.

\bibitem{MatousekVondrak}
J. Matousek and J. Vondrak.
\newblock The Probabilistic Method: Lecture Notes.
\newblock Dept. of Applied Mathematics, Charles University, Prague, CZ.
Available online at {\tt
kam.mff.cuni.cz/$\tilde{~}$matousek/prob-ln-2pp.ps.gz}. March 2008.

\bibitem{Nyquist'24}
H. Nyquist.
\newblock Certain factors affecting telegraph speed.
\newblock Bell System Technical Journal, vol. 3, pp. 324--346, 1924.

\bibitem{Shannon}
C. E. Shannon.
\newblock A mathematical theory of communication.
\newblock Bell System Technical Journal, vol. 27, pp. 379-423 and 623-656,
July and October, 1948.

\bibitem{Shannon:SP}
C. E. Shannon.
\newblock Communication in the presence of noise.
\newblock In {\em Proc. Institute of Radio Engineers}, vol. 37 (1), pp. 10--21, Jan. 1949.

\bibitem{Turing}
A. M. Turing.
\newblock On computable numbers.
\newblock {\em Proceedings of the London Mathematical Society}, vol. 2 (42),
pp. 230--265, 1936.

\end{thebibliography}
\end{document}